\theoremstyle{remark}
\newtheorem*{claimproof}{Proof of Claim}
\newtheorem{nclaim}{Claim}
\newtheorem*{observation}{Observation}
\newcommand{\xcap}{\textsc{XCAP}\xspace}
\newcommand{\apc}{\textsc{Cover by Arithmetic Progressions}\xspace}
\newcommand{\ppc}{\textsc{Exact Cover by Arithmetic Progressions}\xspace}
\newcommand{\apcmod}{\textsc{Cover by Arithmetic Progressions in $\mathbb{Z}_p$}\xspace}
\newcommand{\ppcmod}{\textsc{Exact Cover by Arithmetic Progressions in $\mathbb{Z}_p$}\xspace}
\newcommand{\capp}{\textsc{CAP}\xspace}
\newcommand{\xap}{\textsc{XCAP}\xspace}
\newcommand{\jesper}[1]{\todo[color=yellow]{#1}}
\renewcommand{\mod}{\ensuremath{\mathrm{mod}}}
\begin{document}
\title{Parameterized Algorithms for Covering by Arithmetic Progressions \thanks{Supported by the project CRACKNP that has received funding from the European Research Council (ERC) under the European Union’s Horizon 2020 research and innovation programme (grant agreement No 853234).}}
\author{Ivan Bliznets\inst{1}\orcidID{0000-0003-2291-2556} \and Jesper Nedelof\inst{2}\orcidID{0000-0003-1848-0076} \and Krisztina Szil\'agyi\inst{2}\orcidID{0000-0003-3570-0528}}
\institute{University of Groningen,  the Netherlands, \email{i.bliznets@rug.nl}  \and Utrecht University, the Netherlands \email{\{j.nederlof, k.szilagyi\}@uu.nl}}
\maketitle

\begin{abstract} 
An \emph{arithmetic progression} is a sequence of integers in which the difference between any two consecutive elements is the same.
We investigate the parameterized complexity of two problems related to arithmetic progressions, called \textsc{Cover by Arithmetic Progressions} (\textsc{CAP}) and \textsc{Exact Cover by Arithmetic Progressions} (\textsc{XCAP}). In both problems, we are given a set $X$ consisting of $n$ integers along with an integer $k$, and our goal is to find $k$ arithmetic progressions whose union is $X$. In \textsc{XCAP} we additionally require the arithmetic progressions to be disjoint. Both problems were shown to be NP-complete by Heath~[IPL'90].

We present a $2^{O(k^2)} poly(n)$ time algorithm for \textsc{CAP} and a $2^{O(k^3)} poly(n)$ time algorithm for \textsc{XCAP}. We also give a fixed parameter tractable algorithm for \textsc{CAP} parameterized below some guaranteed solution size. We complement these findings by proving that \textsc{CAP} is Strongly NP-complete in the field $\mathbb{Z}_p$, if $p$ is a prime number part of the input.

\keywords{Arithmetic Progressions \and Set Cover \and Parameterized Complexity \and Number Theory}

%The goal of this paper is twofold. Firstly, we demonstrate the applicability of the FPT framework to number theory problems. Secondly, we try to deepen our understanding of the computational complexity of \textsc{Set Cover} related problems.

\end{abstract}
\section{Introduction}
In the \textsc{Set Cover} problem one is given a universe $U$ and a set system $\mathcal{S} \subseteq 2^U$ of subsets of $U$, along with an integer $k$. The challenge is to detect whether there exist sets $S_1,\ldots,S_k \in \mathcal{S}$ such that $\bigcup_{i=1}^k S_i = U$. Unfortunately, the problem is $W[2]$-hard parameterized by $k$~\cite[Theorem 13.28]{DBLP:books/sp/CyganFKLMPPS15}, and thus we do not expect an algorithm with Fixed Parameter Tractable (FPT) runtime, i.e. runtime $f(k)|x|^{O(1)}$ for some function $f$ and input size $|x|$.

However, \textsc{Set Cover} is incredibly expressive, and it contains many well-studied parameterized problems such as $d$-\textsc{Hitting Set} (for a constant $d$), \textsc{Vertex Cover} and \textsc{Feedback Vertex Set} (see \cite{DBLP:books/sp/CyganFKLMPPS15}), albeit the last problem requires an exponential number of elements.
While all these mentioned special cases are FPT parameterized by $k$, many other special cases of \textsc{Set Cover} remain $W[1]$-hard, and the boundary between special cases that are solvable in FPT and $W[1]$-hard has been thoroughly studied already (see e.g. Table 1 in~\cite{DBLP:journals/corr/Bringmann0MN15}).
An especially famous special case is the \textsc{Point Line Cover} problem, in which one is given points $U \subseteq \mathbb{Z}^2$ and asked to cover them with at most $k$ line segments. While it is a beautiful and commonly used exercise to show this problem is FPT parameterized by $k$,\footnote{A crucial insight is that any line containing at least $k+1$ points must be in a solution.\label{fn:ci}} many slight geometric generalizations (such as generalizing it to arbitrary set systems of VC-dimension $2$~\cite{DBLP:journals/corr/Bringmann0MN15}) are already $W[1]$-hard.

In this paper we study another special case of \textsc{Set Cover}, related to \emph{Arithmetic Progressions (APs)}. Recall that an AP is a sequence of integers of the form $a,a+d,a+2d,\ldots,a+xd$, for some integers $x$, \emph{start value} $a$ and \emph{difference} $d$. We study two computational problems: \textsc{Cover by APs} (abbreviated with \capp) and \textsc{Exact Cover by APs} (abbreviated with \xcap).
In both problems we are given a set of integers $X=\{x_1, x_2, \dots, x_n\}$, and our goal is to find the smallest number of APs $s_1, s_2, \dots, s_k$ consisting of only elements in $X$ such that their union covers\footnote{We frequently interpret APs as sets by omitting the order, and "covers" can be read as "contains".} exactly the set $X$.
In the \xap problem, we additionally require that the APs do not have common elements. 
While \capp and \xap are already known to be weakly NP-complete since the 90's~\cite{heath}, the problems have been surprisingly little studied since then. 

The study of the parameterized complexity of \capp and \xap can be motivated from several perspectives.
\begin{itemize}
\item \textbf{The }\textsc{Set Cover}\textbf{ perspective: } \capp and \xap are natural special cases of \textsc{Set Cover} for which the parameterized complexity is unclear. While the problem looks somewhat similar to \textsc{Point Line Cover}, the crucial insight\textsuperscript{\ref{fn:ci}} towards showing it is FPT in $k$ does not apply since an AP can be covered with 2 other AP's. In order to understand the jump in complexity from FPT to $W[1]$-hardness of restricted \textsc{Set Cover} problems better, it is natural to wonder whether properties weaker than the one of \textsc{Point Line Cover}\textsuperscript{\ref{fn:ci}} also are sufficient for getting FPT algorithms.
\item \textbf{The practical perspective:} There is a connection between these problems and some problems that arise during the manufacture of VLSI chips~\cite{grobman1979data}. The connection implies the NP-hardness of the latter problems. Bast and Storandt~\cite{bast2013frequency,bast2014frequency} used heuristics for these problems to compress bus timetables and speed up the process of finding the shortest routes in public transportation networks. 
\item \textbf{The additive number theory perspective}: The extremal combinatorics of covers with (generalized) APs is a very well studied in the field of additive combinatorics. This study already started in the 50's with conjectures made by among others Erd\"os (see~\cite{crittenden1970any} and the references therein), and recently results in spirit of covering sets of integers with sets of low additive energy (of which APs are a canonical example) such as Freiman's Theorem and the Balog-Szemer\'edi-Gowers Theorem also found algorithmic applications~\cite{DBLP:conf/stoc/BringmannN20,DBLP:conf/stoc/ChanL15}.
\item \textbf{The "not about graphs" perspective:} Initially, applications of FPT algorithms were mostly limited to graph problems.\footnote{There has even been a series of workshops titled "Parameterized Complexity: Not-About-Graphs" (\href{http://fpt.wdfiles.com/local--files/upcoming-conferences-dagstuhl-seminars-and-workshops/Description.pdf.pdf}{link}) to extend the FPT framework to other fields.} More recently, FPT algorithms have significantly expanded the realm of their applicability. It now includes geometry, computational social choice, scheduling, constraint satisfiability, and many other application domains. However, at this stage the interaction of number theory and FPT algorithms seems to be very limited. 
\end{itemize}

\paragraph{Our Contributions.}
Our main results are FPT algorithms for \capp and \xcap:

\begin{restatable}{theorem}{thmcovering}
%\begin{theorem}
\label{thm:covering}
\capp admits an algorithm running in time $2^{O(k^2)} poly(n)$.
\end{restatable}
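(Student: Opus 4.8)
The plan is to design a bounded-depth branching algorithm. It is convenient to work with a slightly more general problem: given $X$, a target set $Y\subseteq X$, and a budget $k$, decide whether there are APs $s_1,\dots,s_k$, each a subset of $X$, with $Y\subseteq\bigcup_i s_i$; the original problem is the case $Y=X$, and this generalization is what the recursion naturally produces. Before branching I would record two normalizations. First, since overlaps are allowed in \capp, in any solution one may enlarge each AP to a maximal AP of $X$ containing it, and also shrink it so that its first and last elements lie in $Y$; hence only APs whose extreme elements are dictated by $Y$ need be considered. Second, the AP covering $y_1=\min Y$ may be assumed to start at $y_1$, because any of its elements below $y_1$ lie outside $Y$ and can be dropped.

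Now the branching. If $Y=\emptyset$ answer yes; if $k=0$ and $Y\neq\emptyset$ answer no; if $|Y|\le k$ answer yes, covering $Y$ by $|Y|$ singleton APs. So assume $|Y|\ge k+1$ and let $y_1<\dots<y_{k+1}$ be the $k+1$ smallest elements of $Y$. Since a solution uses only $k$ APs, by pigeonhole some AP $s$ of the solution contains at least two of $y_1,\dots,y_{k+1}$; pick two of them, $y_a<y_b$, consecutive among the $Y$-elements of $s$ in this range. The crucial point is the case $Y=X$: then $y_1,\dots,y_{k+1}$ are the $k+1$ smallest elements of $X$, so there is no element of $X$ — in particular no element of $s$ — strictly between $y_a$ and $y_b$, hence $y_a,y_b$ are consecutive in $s$, the difference of $s$ equals $y_b-y_a$, and together with maximality this determines $s$ completely. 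Thus for each of the $\binom{k+1}{2}=O(k^2)$ choices of the pair $\{y_a,y_b\}$ we obtain one candidate AP $\hat s$ (the maximal AP of $X$ through $y_a$ with difference $y_b-y_a$); we commit to it, replace $Y$ by $Y\setminus\hat s$, decrease $k$, and recurse. Correctness one way: replacing $s$ by $\hat s\supseteq s$ keeps the solution valid, and the remaining $k-1$ APs then cover $Y\setminus\hat s$; the converse is immediate since $\hat s\cup(Y\setminus\hat s)\supseteq Y$.

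I expect the main obstacle to be the recursive step once $Y\subsetneq X$. After the first committed AP is removed, the elements of $Y$ are no longer consecutive in $X$ — the removed AP's elements are interspersed — so between the probe elements $y_a$ and $y_b$ there may be elements of $X$ belonging to $s$, and then the difference of $s$ is pinned down only as a divisor of $y_b-y_a$, of which there are $n^{o(1)}$ rather than $f(k)$ many, which would break fixed-parameter tractability. Resolving this is the heart of the argument: one must additionally branch over how the sought AP threads through the at most $k-1$ already-committed APs (whose differences are known), converting the divisibility ambiguity into a bounded number of congruence conditions and hence $2^{O(k)}$ candidates for the difference. For the accounting: the search tree has depth $k$, since the budget drops by one at each step, and $2^{O(k)}$ children per node, giving $2^{O(k^2)}$ leaves, each processed together with the per-node bookkeeping in $\mathrm{poly}(n)$ time, for the claimed $2^{O(k^2)}\,poly(n)$ running time.
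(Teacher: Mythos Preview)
Your high-level strategy---bounded-depth branching on which AP to commit next---matches the paper's, and your first step (the case $Y=X$, where $y_a,y_b$ are forced to be consecutive in $s$) is correct and in fact simpler than the paper's uniform treatment. One minor slip: there \emph{can} be elements of $X$ strictly between $y_a$ and $y_b$ (namely $y_{a+1},\dots,y_{b-1}$); what you actually need, and what does follow from your choice of $y_a,y_b$, is that none of them lies in $s$.

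The genuine gap is precisely where you flag it, but your proposed resolution does not go through. You suggest branching on how the sought AP ``threads through'' the committed APs and claim this yields $2^{O(k)}$ candidate differences, yet you supply no mechanism that bounds the branching: congruence conditions relative to the known differences $d_1,\dots,d_{k_1}$ do not by themselves restrict $d$ to $f(k)$ values, because the number of elements of $s$ hidden in $(y_a,y_b)$---all lying in the committed set $C$---is a~priori unbounded in $k$. The paper's resolution rests on an external combinatorial fact you never invoke, the Crittenden--Vanden Eynden theorem that $k$ infinite APs covering $\{1,\dots,2^k\}$ cover all of $\mathbb{N}$. From it one deduces: if the committed APs cover the first $t$ terms of $s$ past $y_a$ but not $y_b$, and none of them ends before $y_b$, then $t<2^k$, hence $d\ge (y_b-y_a)/2^k$ and there are at most $2^k$ candidate differences. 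The ``none of them ends'' hypothesis is also why $k+1$ probes are not enough: with only two probe points in $s$ there is a single gap, and a committed AP may well terminate inside it. The paper instead takes the $k^2+1$ smallest uncovered elements, so by pigeonhole some solution AP contains at least $k+1$ of them; this gives $k$ consecutive gaps, and since at most $k_1\le k-1$ committed APs can end, some gap is clean and the lemma applies. Without the Crittenden--Vanden Eynden bound (or an equivalent structural result) and without raising the probe count from $k+1$ to $k^2+1$, your recursive step cannot be completed.
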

%\end{theorem}
On a high level, this theorem is proved with a bounded search tree technique (similar to \textsc{Point Line Cover}): In each recursive call we branch on which AP to use. Since $k$ is the number of APs in any solution the recursion depth is at most $k$.
The difficulty however, is to narrow down the number of recursive calls made by each recursive call.
As mentioned earlier, the crucial insight\textsuperscript{\ref{fn:ci}} does not apply since an AP can be covered with 2 other AP's.
 We achieve this by relying non-trivially on a result (stated in Theorem~\ref{thm:2k}) by Crittenden and Vanden Eynden~\cite{crittenden1970any} about covering an interval of integers with APs, originally conjectured by Erd\"os.
The proof of this theorem is outlined in Section~\ref{sec:cov}.

\begin{restatable}{theorem}{thmpartitioning}
\label{thm:partitioning}
\xcap admits an algorithm running in time $2^{O(k^3)} poly(n)$. 
\end{restatable}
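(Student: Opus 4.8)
The plan is to extend the bounded-search-tree strategy of Theorem~\ref{thm:covering} to the exact setting. As there, we assemble a hypothetical solution one AP at a time, so the recursion tree has depth at most $k$; the whole difficulty is to bound the number of children of each node. We first normalize: sort $X$ and divide out $\gcd\{x-\min X : x \in X\}$, so that we may assume $\min X = 0$ and that $X$ generates $\mathbb{Z}$. A node of the search tree stores a family of pairwise disjoint APs already committed to the solution; let $U \subseteq X$ be the elements not covered by them and $x = \min U$. If $U = \emptyset$ we accept, and if $k$ APs are already committed while $U \neq \emptyset$ we backtrack. Otherwise we branch on the AP $s$ of the solution containing $x$.

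Here disjointness works in our favour: since $s$ must be disjoint from every committed AP, no element of $s$ can be smaller than $x$, so $x = \min s$ and committing $s$ reduces to guessing its difference $d$ and its length $\ell$. To bound the candidate differences, observe that $x + d$ (the second element of $s$) is again an element of $U$, and it cannot lie far above $x$: otherwise the elements of $U$ between $x$ and $x+d$ --- which $s$ avoids --- would have to be covered by the at most $k-1$ remaining APs of the solution, and after rescaling by $d$ a Theorem~\ref{thm:2k}-type argument forces those APs to cover the position of $x$ as well, contradicting disjointness. A closely related argument bounds $\ell$: letting $L$ be maximal with $x, x+d, \dots, x+(L-1)d$ all in $U$, any elements $x+\ell d, \dots, x+(L-1)d$ left over by a short $s$ must be covered by the remaining APs, which after rescaling cover a block of consecutive integers; a long block would, via Theorem~\ref{thm:2k}, again force coverage of $x$. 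Hence $d$ and $\ell$ each range over a set whose size depends only on $k$, and committing $s$ and recursing gives the claimed bound once the branching is multiplied over the $\le k$ levels.

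The main obstacle is precisely this pruning, which is subtler than in \capp. First, our APs are finite while Theorem~\ref{thm:2k} concerns (infinite) APs, so one cannot simply extend the leftover APs to infinity and invoke the theorem; instead one must work with their intersections with the progression $\{x + jd : j \in \mathbb{Z}\}$, keep track of the moduli these intersections induce, and ensure that the infinite hulls do not ``overshoot'' past $x$ in a way that voids the contradiction. Second --- and unlike in \capp --- we cannot take $s$ to be as long as possible without loss of generality, since the exact-cover requirement may force $s$ to stop early, so the length genuinely has to be branched over. Pushing these arguments through costs one extra factor of $k$ in the exponent compared with Theorem~\ref{thm:covering}: the number of viable (difference, length) choices at a node is $2^{O(k^2)}$ rather than $2^{O(k)}$, and across the $\le k$ levels of the recursion this yields the stated $2^{O(k^3)}\,\poly(n)$ running time.
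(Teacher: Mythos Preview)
Your high-level plan---branch on the AP containing the smallest uncovered element $x$, and at each node guess its difference and length---is natural, but the core step of bounding the candidate differences does not go through. You claim that $x+d$ ``cannot lie far above $x$'' because the uncovered elements between $x$ and $x+d$ are handled by at most $k-1$ other APs and a ``rescaling by $d$'' plus Theorem~\ref{thm:2k} would then force those APs to cover $x$. But the elements of $U$ between $x$ and $x+d$ are not contained in the progression $\{x+jd : j\in\mathbb{Z}\}$ (they lie strictly between $x$ and $x+d$), so there is nothing to rescale, and Theorem~\ref{thm:2k}/Lemma~\ref{lem:covering} simply do not apply to this configuration.

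Concretely, take $k=2$ and $X=\{0,2,3,\ldots,M,N,2N\}$ with $M\ll N$ and $N$ large (e.g.\ $M=10$, $N=100$). One checks that the \emph{only} partition of $X$ into two APs is $\{0,N,2N\}\cup\{2,\ldots,M\}$. Here $x=0$, the AP through $x$ has difference $d=N$, and $x+d=N$ is the $(M{+}1)$-st element of $U$. Since $M$ can be chosen arbitrarily large while $k$ stays $2$, no bound of the form ``$x+d$ is among the first $f(k)$ elements of $U$'' can hold. Your branching would therefore miss the unique solution.

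This is exactly why the paper does \emph{not} commit to a full AP at each step. Instead it carries, for every progression $o_i$, a set $T_i$ of certainly-assigned elements together with a ``potential extension'' $P_i$ and a (possibly unknown) difference $d_i$. The difference of $o_i$ is only pinned down once a second element of $T_i$ is discovered, and crucially all elements between those two witnesses lie in $\bigcup_j P_j$---i.e.\ they sit on the already-guessed progressions $o_j$. This lets one branch, for each known $d_j$, on how many steps of $o_j$ fit between two consecutive interruptions by $o_i$; Lemma~\ref{lm:exactmain} bounds that number by $2^{k-1}$, yielding $(2^k+2)^k\cdot k(k{+}1)=2^{O(k^2)}$ candidate differences. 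The extra bookkeeping (the $P_i$'s, and resolving their pairwise intersections) is what you are missing, and it is what makes the argument work.
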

On a high level, the proof of this theorem follows the proof of Theorem~\ref{thm:covering}. However, to accomodate the requirement that the selected APs are disjoint we need a more refined recursion strategy. The proof of this theorem is outlined in Section~\ref{sec:part}.

We complement these algorithms with a new hardness result. Already in the 1990s, the following was written in the paper that proved weak NP-hardness of \capp and \xcap:
\definecolor{quotemark}{gray}{0.7}
\newcommand{\faquote}[3]{%
	\begin{list}{}{%
			\setlength{\leftmargin}{0.05\textwidth}
			\setlength{\rightmargin}{0.05\textwidth}
		}
		\item[]%
		\begin{picture}(0,0)(0,0)
		\put(-15,-5){\makebox(0,0){\scalebox{3}{\textcolor{quotemark}{\bfseries``}}}}
		\end{picture}\em\ignorespaces%
		#3
		\newline%
		\makebox[0pt][l]{\hspace{0.9\textwidth}%
			\begin{picture}(0,0)(0,0)
			\put(15,10){\makebox(0,0){%
					\scalebox{3}{\textcolor{quotemark}{\rm\bfseries''}}}%
			}
			\end{picture}}%
		\textsc{--- #1}
		\mbox{}\hfill\textsl{#2}
	\end{list}
}
\faquote{Heath}{{\normalfont \cite{heath}}}{
	\textit{Because the integers used in our proofs are exponentially larger than $|X|$, we have not shown our problems to be NP-complete in the strong sense [..]. Therefore, there is hope for a pseudopolynomial time algorithm for each problem.}}

While we do not directly make progress on this question, we show that two closely related problems \emph{are} strongly NP-hard.
Specifically, if $p$ is an integer, we define an AP in $\mathbb{Z}_p$ as a sequence of the form 
\[
    a,a+d\ (\mod\ p), a+2d\ (\mod\ p),\ldots, a+xd\ (\mod\ p).
\]
In the \apcmod problem one is given as input an integer $p$ and a set $X \subseteq \mathbb{Z}_p$ and asked to cover $X$ with APs in $\mathbb{Z}_p$ that are contained in $X$ that cover $X$. In \ppcmod we additionally require the APs to be disjoint. We show the following:
\begin{restatable}{theorem}{stronglyNPC}
    \apcmod  and \ppcmod are strongly NP-complete.
\end{restatable}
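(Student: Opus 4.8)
The plan is to establish membership in NP and then give a polynomial-time many-one reduction from a strongly NP-complete covering problem --- I would use \textsc{Vertex Cover} --- engineered so that the modulus $p$ it produces is a prime of size polynomial in the input; together these give strong NP-completeness. I would treat both problems at once: the reduction for \apcmod will be built so that the covers it produces can be taken pairwise disjoint, which yields \ppcmod as well.

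Membership in NP is routine. An AP in $\mathbb{Z}_p$ is described by a start value, a difference and a length, each an element of $\{0,\dots,p-1\}$ and hence of size polynomial in the input; a candidate solution is a list of at most $|X|$ such triples, and one checks in polynomial time that every described AP is contained in $X$, that the union of all of them equals $X$, and (for \ppcmod) that they are pairwise disjoint.

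For the hardness I would start from a \textsc{Vertex Cover} instance $G=(V,E)$ with budget $t$ and build an instance $(p,X,k)$ of \apcmod as follows. Pick a prime $p=\poly(|V|+|E|)$ (its existence and polynomial-time computability follow e.g.\ from Bertrand's postulate applied to a suitable range). Represent each vertex $v$ by a long ``selector AP'' $A_v\subseteq\mathbb{Z}_p$, all of a common length $L$; represent each edge $uv$ by a tiny ``link set'' placed so that it lies inside $A_u\cap A_v$ and inside no other $A_w$; and fill the remaining slots of the selector APs with private ``filler'' points, setting $X:=\bigcup_{v\in V}A_v$. Choose $k:=t+c$, where $c$ is the number of APs unavoidably spent cleaning up filler, calibrated so that $X$ admits a $k$-cover by selector APs precisely when $G$ has a vertex cover of size $t$. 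The $(\Leftarrow)$ direction is then immediate: a vertex cover $S$ of size $t$, together with the short ``filler APs'' of the vertices outside $S$, covers $X$ with $k$ APs, and these can be chosen pairwise disjoint. For $(\Rightarrow)$ I would argue that a $k$-cover cannot afford any ``unintended'' AP: every AP of $\mathbb{Z}_p$ that is contained in $X$ and has length close to $L$ must be one of the $A_v$, whereas short APs --- in particular the unavoidable $2$-term APs through arbitrary pairs of points of $X$ --- are too wasteful to fit inside the calibrated budget; hence any $k$-cover consists of $t$ selector APs that together cover every link set, i.e.\ a vertex cover of size $t$.

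The main obstacle is number-theoretic and has two parts. First, one must place the selector APs so that $A_u\cap A_v$ equals the prescribed tiny link set for each edge and is empty for each non-edge, while every selector AP stays long: here I would exploit that APs with coprime differences meet in a sparse AP (a CRT-type phenomenon), so giving the selector APs carefully chosen, essentially pairwise coprime differences lets one control all pairwise intersections simultaneously with $p$ still polynomial. Second --- and this is the most delicate step --- one must certify that $X$ contains no long ``spurious'' AP other than the $A_v$, an ``AP-genericity'' property in the spirit of Sidon- or Behrend-type constructions; this is exactly what makes the budget argument in the $(\Rightarrow)$ direction valid, and it also buys exactness for \ppcmod, since once a tight solution is forced to use only selector APs, their link-set overlaps can be removed by a local disjointification that does not increase the number of APs.
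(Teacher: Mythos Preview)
Your proposal outlines a direct reduction from \textsc{Vertex Cover}, but the two steps you yourself flag as the ``main obstacle'' are not carried out, and at least one rests on a misconception. In $\mathbb{Z}_p$ with $p$ prime every nonzero element is a unit, so ``essentially pairwise coprime differences'' is not a meaningful constraint and there is no CRT-type sparsification of intersections; if instead you intend the selector APs to avoid wrapping around so that the picture is the same as in $\mathbb{Z}$, then you are effectively trying to build Heath's gadget with polynomially bounded integers, and the paper's Lemma~\ref{lm:mod} shows precisely that AP structure cannot in general be compressed to polynomial magnitude. The second obstacle---certifying that the union $X=\bigcup_v A_v$ contains no long spurious AP---is the heart of the argument, and invoking ``Sidon- or Behrend-type constructions'' does not discharge it: Behrend sets avoid $3$-term APs inside the set, which is the opposite of what you need (you want many prescribed long APs and no others), and no concrete construction is given. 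The disjointification step for \ppcmod is likewise only asserted.

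The paper's proof is entirely different and much shorter: rather than designing a new gadget, it takes Heath's existing (weak) reduction, which already produces an instance $S\subseteq\mathbb{Z}$ with entries bounded by $2^{q(n)}$, and reduces it modulo a suitable prime $p$ of polynomial size. The only danger is that a non-AP triple $\{x,y,z\}$ in $S$ becomes an AP in $\mathbb{Z}_p$, or that two elements collide; this happens only if $p$ divides one of the $O(n^3)$ nonzero integers $2x-y-z$ or $x-y$, each of which has at most $q(n)+2$ prime factors. A counting argument then guarantees a good prime below $n^6(q(n)+2)^2$, found by brute force in polynomial time. This sidesteps all the construction issues your plan runs into.
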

While this may hint at strong NP-completeness for \capp and \xcap as well, since often introducing a (big) modulus does not incur big jumps in complexity in number theoretic computational problems (confer e.g. $k$-\text{SUM}, \textsc{Partition}, etc.), we also show that our strategy cannot directly be used to prove \capp and \xcap to be strongly-NP.
Thus this still leaves the mentioned question of Heath~\cite{heath} open.
This result is proven in Section~\ref{sec:strongnp}.

Finally, we illustrate that \capp is generalized by a variant of \textsc{Set Cover} that allows an FPT algorithm for a certain below guarantee parameterization. In particular, \capp always has a solution consisting of $|X|/2$ APs that cover all sets.
\begin{restatable}{theorem}{belowguarantee}
There is an $2^{O(k)}n^{O(1)}$ time algorithm that detects if a given set $X$ of integers can be covered with at most $|X|/2-k$ APs.
\end{restatable}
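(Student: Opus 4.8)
The plan is to reduce covering below the guarantee to a bounded packing problem over arithmetic progressions, and then to solve that problem by color coding.

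\emph{A reformulation.} Given any cover of $X$ by APs, assign every $x\in X$ to exactly one AP of the cover containing it; this splits $X$ into nonempty parts $X_A$, one per AP actually used, each part being contained in an AP that lies inside $X$. The number $m$ of APs used equals the number of parts, and since $\sum_A |X_A| = |X|$ we have $\sum_A(|X_A|-2) = |X| - 2m$, i.e.\ $m = \tfrac12\big(|X|-\sum_A(|X_A|-2)\big)$. Consequently $X$ admits a cover by at most $\lfloor|X|/2\rfloor-k$ APs if and only if there are pairwise disjoint sets $Y_1,\dots,Y_t\subseteq X$, each of size at least $3$ and each contained in some AP inside $X$, satisfying $\sum_{i=1}^t(|Y_i|-2)-\big[(|X|-\sum_i|Y_i|)\bmod 2\big]\ge 2k$: the $Y_i$ are the parts of size at least $3$ (parts of size at most $2$ never increase the surplus, so they can be ignored), and the bracketed $\{0,1\}$-term accounts for a possible leftover singleton. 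For the converse one recovers a cover by choosing, for each $i$, a minimal AP $A_i\subseteq X$ with $Y_i\subseteq A_i$ and then covering $X\setminus\bigcup_iA_i$ by pairs and at most one singleton; the same counting shows at most $\lfloor|X|/2\rfloor-k$ APs are used.

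\emph{Bounding the witness.} If $X$ contains an AP of length at least $2k+O(1)$ then a single $Y_i$ already gives surplus $\ge 2k$, so the algorithm outputs \textsc{yes}; such an AP, if it exists, is found in $\poly(n)$ time by extending each pair $\{x,x'\}\subseteq X$ maximally in both directions and testing membership in $X$. Otherwise every AP inside $X$, and hence every $Y_i$, has size at most $2k+O(1)$. Taking a witness with $\sum_i(|Y_i|-2)$ as small as possible subject to being at least $2k$, each term is at least $1$, so $t\le 4k+O(1)$ and $\sum_i|Y_i|=\sum_i(|Y_i|-2)+2t\le 12k+O(1)$. Thus, in the remaining case, an optimal witness lives on a set of at most $s:=12k+O(1)$ elements of $X$.

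\emph{Color coding.} Color $X$ with $s$ colors, either uniformly at random (repeating $2^{O(k)}$ times) or, for a deterministic algorithm, using an $(n,s)$-perfect hash family of size $2^{O(s)}\log n=2^{O(k)}\log n$; a good coloring assigns all elements of an optimal witness distinct colors. Call a color set $C\subseteq[s]$ \emph{realizable} if $X$ contains a colorful subset with color set exactly $C$ that lies inside some AP inside $X$. All realizable $C$ with $|C|\le 2k+O(1)$ are computed by listing the $O(n^2)$ maximal APs of $X$ (each extending a pair, hence of length at most $2k+O(1)$ in this case) and, for each, all of its $2^{O(k)}$ subsets. Crucially, disjoint color sets of colorful sets are automatically realized by disjoint subsets of $X$, so it now suffices to run a subset-sum-style dynamic program over the $2^{O(k)}$ color subsets that computes the maximum of $\sum_i(|C_i|-2)$ over families of pairwise disjoint realizable color sets $C_i$ of size at least $3$, additionally tracking $\sum_i|C_i|\bmod 2$ to implement the leftover-singleton correction; one then compares against the threshold $2k$. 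There are no false positives for any coloring, since a realizable family yields genuine sets $Y_i$ and hence a cover of the stated size, and no false negatives with the claimed probability (respectively deterministically). The overall running time is $2^{O(k)}\poly(n)$.

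\emph{Main obstacle.} The conceptual core is the reformulation together with the a priori bound on the support of an optimal witness; once the problem is recast as packing short ``AP-pieces'' of total surplus at least $2k$, and one observes that both the piece sizes and their number are $O(k)$, color coding applies routinely. The remaining delicacy is the bookkeeping around the floor in $\lfloor|X|/2\rfloor-k$ and the associated parity term, which is handled by carrying a single extra bit through the reduction and the dynamic program.
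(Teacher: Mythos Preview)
Your argument is correct. The reformulation as a packing problem (find disjoint $Y_i$ of size $\ge 3$ inside APs of $X$ with surplus $\sum_i(|Y_i|-2)\ge 2k$, up to a parity correction) is sound, and the two-stage reduction---first testing for a single AP of length $2k+O(1)$, then bounding a minimal witness to $O(k)$ elements and running color coding over the $O(n^2)$ maximal APs---goes through. One small slip: when $|X|$ is odd the threshold on the right-hand side should be $2k+1$ rather than $2k$ (there is an additional $[|X|\bmod 2]$ term); you flag the parity delicacy but write the inequality only for even $|X|$. This does not affect the asymptotics.

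The paper takes a different and more general route. It proves the statement for arbitrary \textsc{$t$-Uniform Set Cover} instances (set systems containing all $t$-subsets of the universe) parameterized below $\lceil n/t\rceil$, and then specializes to $t=2$. Rather than a packing reformulation, it greedily picks sets covering at least $t+1$ new elements; either this already yields enough surplus to answer \textsc{yes}, or at most $tk$ such sets exist and they touch a ``core'' $G$ of $O(k)$ elements. One then searches for $s''\le|G|$ sets that cover $G$ together with sufficiently many extra elements, color-coding those extras and reducing to a \textsc{Set Cover} instance on a universe of size $O(k)$. Your approach is more direct for \textsc{CAP} because it exploits that APs are determined by pairs (so there are only $O(n^2)$ maximal APs, each of length $O(k)$ in the hard case), but it does not obviously extend to $t$-Uniform Set Cover; the paper's greedy-plus-core argument does, at the price of a less transparent reduction.
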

This result is proved in Section~\ref{sec:belowgua}. 

\section{Preliminaries}
%\jesper{Define "covers", "stops"?}
%\todo[inline, color=cyan]{increment->difference}
%\todo[inline, color=cyan]{Notation for infinite extension?}
%\todo[inline, color=cyan]{Notation for extend to the left?}
For integers $a,b$ we denote by $[a,b]$ a set $\{a,a+1, \dots, b\}$, for $a=1$ instead of $[1,b]$ we use a shorthand $[b]$, i.e. $[b]=\{1,2,\dots, b\}$. For integers $a,b$ we write $a\vert b$ to show that $a$ divides $b$. For integers $a_1,\dots, a_n$ we denote their greatest common divisor by $gcd(a_1,\dots, a_n)$.

%\jesper{gcd denotes greatest common divisor, $a|b$ denotes $a$ divides $b$}
An \emph{arithmetic progression} (AP) is a sequence of numbers such that the difference of two consecutive elements is the same.
While AP is a sequence, we will often identify an AP with the set of its elements. 
We say an AP \emph{stops in between} $l$ and $r$ if it largest element is in between $l$ and $r$. We say it \emph{covers} a set $A$ if all integers in $A$ occur in it. Given an AP $a, a+d, a+2d,\dots$ we call $d$ the \emph{difference}. We record the following easy observation:
\begin{observation} 
\label{obs:interesection}
    The intersection of two APs is an AP.
\end{observation}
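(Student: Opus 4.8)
The plan is to prove this by elementary number theory: reduce to the non-degenerate case, identify the common difference of the intersection with a least common multiple, and then pin down the whole intersection from its first two elements. First I would dispose of the trivial cases. If either AP consists of a single integer (difference $0$, or length one), then the intersection is either that integer or empty, and both a singleton and the empty set count as (possibly degenerate) APs. Likewise, if $S_1 \cap S_2$ is empty or a singleton we are immediately done. After possibly reversing the order of a sequence we may assume both APs have positive difference, so write them as sets $S_1 = \{a_1 + i d_1 : 0 \le i \le x_1\}$ and $S_2 = \{a_2 + j d_2 : 0 \le j \le x_2\}$ with $d_1, d_2 \ge 1$, and assume $|S_1 \cap S_2| \ge 2$.

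Next, let $p < q$ be the two smallest elements of $S_1 \cap S_2$. Since $p, q \in S_1$ they are congruent modulo $d_1$, hence $d_1 \mid q - p$; symmetrically $d_2 \mid q - p$, so $q - p$ is a common multiple of $d_1$ and $d_2$. I would then argue that in fact $q - p = \mathrm{lcm}(d_1, d_2) =: m$. Indeed, if $m < q - p$, consider $p + m$: it is congruent to $a_1$ modulo $d_1$ and satisfies $a_1 \le p < p + m < q \le a_1 + x_1 d_1$, so it falls within the index window defining $S_1$ and therefore lies in $S_1$; by the same reasoning $p + m \in S_2$. But then $p + m$ is an element of $S_1 \cap S_2$ strictly between $p$ and $q$, contradicting the choice of $p, q$.

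Finally, let $r$ be the largest element of $S_1 \cap S_2$. As above, $r - p$ is a common multiple of $d_1$ and $d_2$, hence a multiple of $m$, say $r = p + tm$ with $t \ge 1$. For every $0 \le i \le t$, the number $p + im$ is congruent to $a_1$ modulo $d_1$ and lies in $[p, r] \subseteq [a_1, a_1 + x_1 d_1]$, so $p + im \in S_1$, and symmetrically $p + im \in S_2$. Conversely, any $s \in S_1 \cap S_2$ satisfies $p \le s \le r$ and $m \mid s - p$ (again because $s, p$ lie in both APs), so $s = p + im$ for some $0 \le i \le t$. Hence $S_1 \cap S_2 = \{p, p+m, \ldots, p+tm\}$, which is an AP with difference $m$.

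The only mildly delicate point — the one I would be most careful to get right — is the range bookkeeping: checking that each candidate $p + im$ is not merely congruent to $a_1$ modulo $d_1$ but actually falls inside the finite window $[a_1, a_1 + x_1 d_1]$ (and likewise for $S_2$), which is what upgrades "lies in the infinite AP" to "lies in $S_1$". Everything else is routine. An alternative packaging of the same idea is to first note, via the Chinese Remainder Theorem, that the intersection of the two \emph{infinite} APs extending $S_1$ and $S_2$ is either empty or a full residue class modulo $\mathrm{lcm}(d_1, d_2)$, i.e.\ an infinite AP, and then observe that $S_1 \cap S_2$ is obtained from it by intersecting with the interval $[\max(a_1, a_2), \min(a_1 + x_1 d_1, a_2 + x_2 d_2)]$, which trims an infinite AP down to a finite one.
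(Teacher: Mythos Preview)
Your argument is correct and complete. The paper itself gives no proof of this observation: it is simply recorded as an ``easy observation'' and used freely thereafter, so there is nothing to compare against. What you have written is exactly the standard justification (handle degenerate cases, identify the common difference as $\mathrm{lcm}(d_1,d_2)$ via the minimal gap, then verify the intersection is the full block between its minimum and maximum), and your alternative CRT packaging is equally valid. If anything, you have been more careful than necessary for a fact the paper treats as folklore; the range bookkeeping you flag is indeed the only place where a sloppy argument could slip, and you handled it correctly.
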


%For a set $X$, we let $\mathcal{P}(X)$ or $2^X$ denote the set of all subsets of a set $X$.\todo[color=cyan]{do we use both?}\ivan{Ivan: i think now we are NOT using both of them}

If $X$ is a set of integers, we denote
\[
\begin{aligned}
    X^{>c} &=\{x | x \in X, x>c \}, \quad X^{\geq c} &= \{x | x \in X, x\geq c \},\\
    \quad X^{<c} &=\{x| x \in X, x<c \}, \quad X^{\leq c}&=\{x| x \in X, x \leq c \}. 
\end{aligned}
\]
%If $P$ is a finite or infinite arithmetic progression with the first member $a$, difference $d$ and $S$ is a set of numbers then by $P\sqcap S$ we denote the maximal $X_{\ell}\subseteq S$ where  $X_{\ell}=\{a, a+d, a+2d, \dots, a+ \ell d\}$.  \todo[color=cyan]{use this as extend to the right}

Given a set $X$ and an AP $A=\{a, a+d, a+2d,\dots\}$, we denote by $A\sqcap X$ the longest prefix of $A$ that is contained in $X$. In other words, $X\sqcap A=\{a, a+d, \dots, a+\ell d\}$, where $\ell$ is the largest integer such that $a+\ell'd\in X$ for all $\ell'\in \{0,\dots, \ell\}$.

For a set of integers $X$ and integer $p$ we denote by $X_p=\langle x \bmod p| x\in X\rangle$. Here the $\langle \rangle$ symbols indicate that we build a \emph{multiset} instead of a set (so each number is replaced with its residual class mod $p$ and we do not eliminate copies).

%In our FPT-algorithms we heavily rely on the following theorem.
We call an AP $s$ \emph{infinite} if there are integers $a,d$ such that $s =(a, a+d, a+2d, \dots )$. Note that under this definition, the constant AP containing only one number and difference $0$ is also infinite.

The following result by Crittenden and Vanden Eynden will be crucial for many of our results:
\begin{theorem}[\cite{crittenden1970any}] \label{thm:2k}
Any $k$ infinite APs that cover the integers $\{1,\ldots,2^k\}$ cover the whole set of positive integers.
\end{theorem}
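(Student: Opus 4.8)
This is the classical theorem of Crittenden and Vanden Eynden, and I would reconstruct their inductive argument. The first step is to pass from one-sided infinite APs to (two-sided) residue classes. Discard every AP $A_i=\{a_i,a_i+d_i,\dots\}$ with $a_i>2^k$, since it is disjoint from $\{1,\dots,2^k\}$ and irrelevant to the hypothesis; replace each surviving non-constant AP by the full residue class $a_i\bmod d_i$. This only adds integers strictly smaller than $a_i\le 2^k$, hence changes neither the coverage of $\{1,\dots,2^k\}$ nor the coverage of $\{2^k,2^k+1,\dots\}$, so it suffices to prove the following translation-invariant statement, from which the theorem follows at once: \emph{if $k$ residue classes cover some block of $2^k$ consecutive integers, then they cover all of $\mathbb{Z}$}. (The finitely many constant, i.e.\ one-point, APs are handled by a routine separate argument, splitting $\{1,\dots,2^k\}$ at the removed points and applying the statement to the longest resulting sub-block.)

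I would then prove the residue-class statement by induction on $k$. For $k=1$, a class containing two consecutive integers has modulus dividing $1$, hence equals $\mathbb{Z}$. For the inductive step, let $C_i=r_i\bmod m_i$ cover a block $I$ of $2^k$ consecutive integers; if some $m_i=1$ we are done, so assume every $m_i\ge 2$. Fix one class $C_j$ with modulus $m=m_j$ and look modulo $m$: $C_j$ occupies a single residue class mod $m$, so for every residue $c\not\equiv r_j\pmod m$ the progression $\{c,c+m,c+2m,\dots\}$ is disjoint from $C_j$, and therefore $I\cap(c\bmod m)$ must be covered by the remaining $\le k-1$ classes. By Observation~\ref{obs:interesection} each $C_i\cap(c\bmod m)$ is again a residue class (or empty), and rescaling $c\bmod m$ to $\mathbb{Z}$ via $n\mapsto(n-c)/m$ turns these into at most $k-1$ residue classes in $\mathbb{Z}$ covering a block of at least $\lfloor 2^k/m\rfloor$ consecutive integers. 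If that block has at least $2^{k-1}$ integers, the induction hypothesis shows these classes cover $\mathbb{Z}$, i.e.\ the whole progression $c\bmod m$ lies in $C_1\cup\dots\cup C_k$; ranging over all $c$ then gives $\mathbb{Z}=\bigcup_i C_i$.

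The crux, and the main obstacle, is the inequality $\lfloor 2^k/m\rfloor\ge 2^{k-1}$, which forces $m=2$, so the modular reduction goes through immediately when some modulus equals $2$. At the other extreme, if some modulus is at least $2^k$ then that class meets $I$ in at most one point, and deleting it leaves $k-1$ classes covering a block of at least $\lceil(2^k-1)/2\rceil=2^{k-1}$ consecutive integers, so induction applies there too. The hard case is when every modulus lies strictly between $2$ and $2^k$: neither a deletion nor a plain modular reduction preserves enough length by itself, and one must combine the two ideas, exploiting that several classes of moderate modulus cannot jointly waste too much of $I$. Carrying out this trade-off so that the threshold comes out to be exactly $2^k$ — and in particular handling all moduli in the "medium" range, as well as the case in which every modulus is odd — is the technical heart of Crittenden and Vanden Eynden's proof, and I would invoke their case analysis for that step rather than redo it from scratch.
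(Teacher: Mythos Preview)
The paper does not contain a proof of Theorem~\ref{thm:2k}; it is quoted from~\cite{crittenden1970any} and used only as a black box inside the proof of Lemma~\ref{lem:covering}. There is therefore nothing in the paper to compare your proposal against.

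As for the proposal on its own terms: it is an outline, not a proof, and it explicitly defers the main difficulty. The passage from one-sided infinite APs to two-sided residue classes is correct, and the induction does go through directly in the two boundary situations you isolate (some modulus equal to $2$, or some modulus at least $2^{k}$). But the entire substance of the theorem lies in the remaining case---every modulus in $\{3,\ldots,2^{k}-1\}$---and for that you write only that you ``would invoke their case analysis \ldots\ rather than redo it from scratch.'' That is exactly what the paper already does by citing~\cite{crittenden1970any}, so your write-up contributes no argument beyond the citation. If a self-contained proof is wanted, that hard case must actually be carried out (or replaced by one of the later, shorter proofs in the literature); otherwise the paper's choice to state the result with a citation is the appropriate one.
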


\section{Algorithm for \textsc{Cover by Arithmetic Progressions (CAP)}}
\label{sec:cov}
Before describing the algorithm, we introduce an auxiliary lemma. This lemma will be crucially used to narrow down the number of candidates for an AP to include in the solution to at most $2^k$. 
\begin{restatable}{lemma}{lmcovering}
%\begin{lemma}
\label{lem:covering}
    Let $s_0$ be an AP with at least $t+1$ elements. Let $s_1, \dots, s_k$ be APs that cover the elements $s_0(0), \dots, s_0(t-1)$, but not $s_0(t)$. The APs $s_1,\dots, s_k$ may contain other elements. Suppose that each AP $s_1, \dots, s_k$ has an element larger than $s_0(t)$. Then we have $t<2^k$.
%\end{lemma}
\end{restatable}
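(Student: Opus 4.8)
The plan is to reduce the statement to Theorem~\ref{thm:2k}. Throughout I will assume $s_0$ is written in increasing order, say $s_0=(b,b+e,b+2e,\dots)$ with integer $e\ge 1$; this is used essentially, since for a decreasing $s_0$ the hypothesis that every $s_j$ has an element larger than $s_0(t)$ is vacuous (then $s_0(t)$ is the smallest of $s_0(0),\dots,s_0(t)$) and the claim actually fails. The target is to build from $s_1,\dots,s_k$ at most $k$ \emph{infinite} APs of positive integers that cover $\{1,\dots,t\}$ but avoid $t+1$: Theorem~\ref{thm:2k} then forbids $t\ge 2^k$.

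The first step is to prune: discard every $s_j$ that contains none of $s_0(0),\dots,s_0(t-1)$. The survivors — at most $k$ of them, and at least one since $s_0(0)$ is covered — still cover $s_0(0),\dots,s_0(t-1)$, still each have an element $>s_0(t)$, and, because $s_0$ is increasing, each now also has an element $<s_0(t)$. The second step is to pass to infinite APs: replace each surviving $s_j$ by its infinite extension $\hat s_j$ (same common difference). The key point is that $\hat s_j$ still avoids $s_0(t)$. Indeed, let $m<s_0(t)<m'$ be the $s_j$-elements flanking $s_0(t)$ (they exist by Step~1); since $s_0(t)\notin s_j$, there is no element of $s_j$ strictly between them, so $m$ and $m'$ are \emph{consecutive} terms of $s_j$ and $m'-m$ is exactly the common difference of $s_j$. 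If $s_0(t)$ were a term of $\hat s_j$, its index would have to lie strictly between the integer indices of $m$ and $m'$ — impossible. So I now have at most $k$ infinite APs covering $s_0(0),\dots,s_0(t-1)$ and missing $s_0(t)$.

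The last step — the one I expect to need the most care — must cope with the fact that $s_0(0),\dots,s_0(t-1)$ is not a block of consecutive integers but an AP with gaps of length $e$, whereas Theorem~\ref{thm:2k} is about $\{1,\dots,2^k\}$. I would intersect each $\hat s_j$ with $P:=\{b+ie:i\ge 0\}$, the infinite extension of $s_0$. By Observation~\ref{obs:interesection} every $\hat s_j\cap P$ is again an AP; it is nonempty (it still contains whichever of $s_0(0),\dots,s_0(t-1)$ lay in $\hat s_j$), it still misses $s_0(t)$, and collectively these APs still cover $s_0(0),\dots,s_0(t-1)$. The bijection $b+ie\mapsto i+1$ identifies $P$ with the positive integers and carries these APs to at most $k$ infinite APs covering $\{1,\dots,t\}$ and avoiding $t+1$. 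If $t\ge 2^k$, then they cover $\{1,\dots,2^k\}$, hence — by Theorem~\ref{thm:2k} — every positive integer, hence $t+1$, a contradiction; therefore $t<2^k$.

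The main obstacle is thus the combination of the "infinite extension still avoids $s_0(t)$'' claim with this last reduction: one has to verify that passing to infinite APs and then intersecting with $P$ simultaneously preserves the covering of $s_0(0),\dots,s_0(t-1)$ and the non-covering of $s_0(t)$. A bit of routine bookkeeping is also needed for the degenerate cases (an $s_j$ of size one, or an intersection $\hat s_j\cap P$ of size one), but these are still "infinite APs'' under the convention fixed in the preliminaries, so Theorem~\ref{thm:2k} applies to them as well.
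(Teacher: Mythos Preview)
Your proposal is correct and follows essentially the same route as the paper: intersect each $s_j$ with (the extension of) $s_0$, pass to index space, take infinite extensions, and apply Theorem~\ref{thm:2k} to reach a contradiction with $t\ge 2^k$. The only cosmetic difference is the order of operations---the paper first intersects $s_i$ with $s_0$, reads off the index set $T_i\subseteq\{0,\dots,t-1\}$, and \emph{then} takes the infinite extension $T_i'$, whereas you first extend $s_j$ to $\hat s_j$ and then intersect with $P$ before mapping to indices; these are equivalent rearrangements of the same argument. Your flanking argument for why $\hat s_j$ still misses $s_0(t)$ is a slightly more explicit version of the paper's terse final line (``$s_i$ has an element larger than $s_0(t)$, so $s_0(t)$ is covered by $s_i$''), and your explicit remark that $s_0$ must be taken increasing is a reasonable clarification the paper leaves implicit.
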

\begin{proof}
	Suppose for contradiction that $t\geq 2^k$. Let $t_i$ be the intersection of $s_i$ and $s_0$ for $i\in [k]$. Note that by Observation~\ref{obs:interesection}, $t_1,\dots, t_k$ are APs as well. For $i\in [k]$, we define $T_i=\{j\in \{0,\dots, t-1\}:\: s_0(j)\in t_i\}$. 
	
	We claim each $T_i$ is an AP:
	To see this, let $a<b<c$ be consecutive (with respect to the sorted order) elements of $T_i$ (if $|T_i|\leq 2$, it is trivially an AP). The elements $s_0(a), s_0(b), s_0(c)$ are consecutive elements of $t_i$. Thus $s_0(b)-s_0(a)=s_0(c)-s_0(b)= d_0$, where $d_0$ is the difference of $s_0$. The above equality can be written as 
	\[
	(s_0(0)+bd_0)-(s_0(0)+ad_0)=(s_0(0)+cd_0)-(s_0(0)+bd_0).
	\]
	Thus we get $b-a=c-b$, as required.
	
	Note that the APs $T_1,\dots, T_k$ cover $\{0,\ldots, t-1\}$. For $i\in [k]$, denote by $T_i'$ the infinite extension of $T_i$ (that is, $T_i'$ contains all integers whose difference with an entry of $T_i$ is a multiple of the difference of $T_i$, if $T_i$ has only one element then $T_i'$ is constant infinite AP). By Theorem~\ref{thm:2k}, $T_1',\dots, T_k'$ cover the whole set $\mathbb{N}$. In particular, $t\in T_i'$ for some $i$. By assumption, $s_i$ has an element larger than $s_0(t)$, so $s_0(t)$ is covered by $s_i$, which leads to a contradiction. 
\end{proof}

Equipped with Lemma~\ref{lem:covering} we are ready to prove our first main theorem:

\thmcovering*

\begin{proof}
    Denote the set of integers given in the input by $X$. Without loss of generality we can consider only solutions where all APs are inclusion-maximal, i.e. solutions where none of the APs can be extended by an element of $X$. In particular, given an element $a$ and difference $d$, the AP is uniquely determined: it is equal to $\{a-\ell d, \dots, a-d, a, a+d, \dots, a+r d\}$, where $\ell, r$ are largest integers such that $a-\ell'd\in X$ for all $\ell'\in [\ell]$ and $a+r'd\in X$ for all $r'\in [r]$. 

    Our algorithm consists of a recursive function $\textsc{Covering}(C, k_1,k_2)$.
    The algorithm takes as input a set $C$ of elements and assumes there are APs $s_1,\ldots,s_{k_1}$ whose union equals $C$, so the elements of $C$ are `covered' already.
    With this assumption, it will detect correctly whether there exist $k_2$ additional APs that cover all remaining elements $X \setminus C$ from the input. Thus $\textsc{Covering}(\emptyset, 0,k)$ indicates whether the instance is a Yes-instance.
    At Line~\ref{lsc} we solve the subproblem, in which set $C$ is already covered, in $2^{k^2}poly(n)$ time if $|X\setminus C|\leq k^2$. This can be easily done by writing the subproblem as an equivalent instance of \textsc{Set Cover} (with universe $U=X\setminus C$ and a set for each AP in $U$) and run the $2^{|U|}poly(n)=2^{k^2}poly(n)$ time algorithm for \textsc{Set Cover} from~\cite{DBLP:journals/siamcomp/BjorklundHK09}. 
    
    For larger instances, we consider the $k^2+1$ smallest uncovered elements $u_1,\dots, u_{k^2+1}$ in Line~\ref{lnunc}, and guess (i.e. go over all possibilities) $u_i$ and $u_j$ such that $u_i$ and $u_j$ both belong to some AP $s$ in a solution and none of the AP's $s_1,\ldots,s_{k_1}$ stops (i.e., has its last element) in between $u_i$ and $u_j$. In order to prove correctness of the algorithm, we will show later in Claim~\ref{clm:pf}
    such $u_i$ and $u_j$ exist. 
    %Note that such $u_i$ and $u_j$ exist since there must exist such an AP $s$ in the solution with at least $k+1$ elements of $u_1,\dots, u_{k^2}$ and each AP $s_1,\ldots,s_{k_1}$ can only stop once in between two such consecutive numbers of $s$.
    
    Note that $u_i$ and $u_j$ are not necessarily consecutive in $s$, so we can only conclude that the difference of $s$ divides $u_j-u_i$. We use Lemma~\ref{lem:covering} to lower bound the difference of $s$ with $(u_j-u_i)/2^k$, and after we guessed the difference we recurse with the unique maximal AP with the guessed difference containing $u_i$ (and $u_j$).
    In pseudocode, the algorithm works as follows:
    \begin{algorithm}
    \caption{Algorithm for \textsc{CAP}}
    \label{alg:cap}
    \begin{algorithmic}[1]
    \State \text{\textbf{Algorithm} $\textsc{Covering}(C,k_1,k_2)$}
    \State Let $k=k_1+k_2$
    \If{$|X \setminus C|\leq k^2$}
        \State Use the algorithm for \textsc{Set Cover} from~\cite{DBLP:journals/siamcomp/BjorklundHK09}\label{lsc}
    \Else
        \State Let $u_1,\dots, u_{k^2}$ be the $k^2+1$ smallest elements of $X\setminus C$  \label{lnunc}
        \For{$i=1\dots k^2$}
            \For{$j=i+1\dots k^2+1$}
                \State Let $D = u_j-u_i$
                \For{$\ell=1\dots 2^k$}
                    \If{$\ell$ divides $D$}
                        \State Let $s = \textsc{MakeAP}(u_i, D/\ell)$\label{ls}
                        \If{$\textsc{Covering}(C\cup s, k_1+1,k_2-1)$}
                        %\State Add $s$ to solution
                        %J: I don't see why we need this since we onyl solve the decision variant
                        \State  \textbf{return true}
                        \EndIf
                    \EndIf
                \EndFor
            \EndFor
        \EndFor
        \State \textbf{return false}
    \EndIf
    \end{algorithmic}
    \end{algorithm}
 
    The procedure $\textsc{MakeAP}(a, d)$ returns the AP $\{a-\ell_1d, \dots, a-d, a, a+d, \dots, a+\ell_2d \}$, where $\ell_1$ (respectively, $\ell_2$) are the largest numbers such that $a-\ell'd\in X$ for all $\ell'\leq \ell_1$ (respectively, $a+\ell'd\in X$ for all $\ell'\leq \ell_2$).
    It is easy to see that if Algorithm \ref{alg:cap} outputs \textbf{true}, we indeed have a covering of size $k$: Since we only add elements to $C$ if they are indeed covered by an AP, and each time we add elements to $C$ because of an AP we decrease our budget $k_2$. 

    For the other direction of correctness, we first claim that our algorithm will indeed at some moment consider an AP of the solution.

\begin{nclaim}\label{clm:pf}
     Suppose there exist an AP-covering $s_1,\dots, s_k$, and let $k_1 \in \{0,\ldots,k\}$ be an integer and let $\cup_{i=1}^{k_1} s_i=C$. Then in the recursive call \textsc{Covering}$(C, k_1,k-k_1)$ we have in some iteration of the for-loops $s=s_h$ at Line~\ref{ls}, for some $h\in\{k_1+1,\dots, k\}$.
\end{nclaim}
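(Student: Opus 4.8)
The plan is to show that among the $k^2+1$ smallest uncovered elements $u_1,\dots,u_{k^2+1}$, there must be two of them, say $u_i$ and $u_j$, that are covered by a common solution-AP $s_h$ with $h > k_1$, and moreover that none of the already-used APs $s_1,\dots,s_{k_1}$ stops strictly between $u_i$ and $u_j$. Once we have such a pair, the inner loops of the algorithm correctly reconstruct $s_h$: the difference $d_h$ of $s_h$ divides $D = u_j - u_i$, and we show $d_h \geq D/2^k$, so $D/d_h$ is one of the values $\ell \in \{1,\dots,2^k\}$ tried, at which point $\textsc{MakeAP}(u_i, D/\ell) = \textsc{MakeAP}(u_i, d_h) = s_h$ since we only consider inclusion-maximal APs.

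First I would set up a counting/pigeonhole argument on the $k^2+1$ elements $u_1 < u_2 < \dots < u_{k^2+1}$. Each of these is uncovered (not in $C$), so each lies in some solution-AP $s_h$ with $h \in \{k_1+1,\dots,k\}$; there are only $k-k_1 \leq k$ such APs. Also consider the at most $k_1 \leq k$ "stopping points" — the last elements of $s_1,\dots,s_{k_1}$ — these cut the range $[u_1, u_{k^2+1}]$ into at most $k_1+1 \leq k+1$ blocks of consecutive $u$'s. Actually it is cleaner to think of it as follows: walk through $u_1, u_2, \dots$ in order; each time we pass a stopping point of some $s_i$ ($i \leq k_1$) we "start a new block", giving at most $k+1$ blocks covering all $k^2+1$ elements. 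Hence some block contains at least $\lceil (k^2+1)/(k+1) \rceil \geq k$... — here I need to be a little careful with the exact arithmetic, but with $k^2+1$ elements and roughly $k$ blocks one block has roughly $k$ of the $u$'s, and within that block, since each $u$ is assigned to one of at most $k-k_1$ solution-APs, two of them $u_i, u_j$ fall into the same solution-AP $s_h$; by construction of the block, no $s_1,\dots,s_{k_1}$ stops between them. (One may need to choose the number of sampled elements slightly larger, e.g. $(k-k_1)(k_1+1)+1$, which is at most $k^2+1$ when $k_1 < k$, and handle $k_1 = k$ trivially since then nothing is left to cover if $C = X$, or it is a No-instance; I would state the precise constant in the final write-up.)

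Next I would verify the difference bound. Given the pair $u_i < u_j$ both in $s_h$ with no $s_1,\dots,s_{k_1}$ stopping between them: suppose for contradiction $d_h < D/2^k$, i.e. $D/d_h > 2^k$, so $s_h$ contains more than $2^k$ elements strictly between (and including) $u_i$ and up to $u_j$. The elements of $s_h$ from $u_i$ onward that lie in $X$ up to but not including the first one missing form a prefix; within the interval $[u_i, u_j]$ all of $X$'s structure is covered by $s_{k_1+1},\dots,s_k$ together with the relevant suffix of $s_1,\dots,s_{k_1}$ — but crucially, since no $s_1,\dots,s_{k_1}$ ends in $(u_i,u_j)$, any $s_i$ with $i \le k_1$ that meets this interval has an element beyond $u_j$. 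This is exactly the hypothesis of Lemma~\ref{lem:covering}: take $s_0$ to be a sub-AP of $s_h$ starting at $u_i$ with step $d_h$; its first $t$ terms (for $t = \lceil D/d_h \rceil$ or so, which is $> 2^k$) are covered by the other APs in the solution restricted to this range, each of which has an element larger than $s_0(t)$, yielding $t < 2^k$, a contradiction. The main obstacle, and the step I expect to require the most care, is getting this reduction to Lemma~\ref{lem:covering} exactly right: I must identify the correct $s_0$ (a truncation of $s_h$), the correct index $t$ where coverage fails, and verify both that $s_0(0),\dots,s_0(t-1)$ are genuinely covered by the remaining APs and that each such AP indeed has an element exceeding $s_0(t)$ — the latter using that we chose $u_i, u_j$ to straddle no stopping point of the already-committed APs and that $u_i, u_j$ are among the \emph{smallest} uncovered elements (so all of $X$ below $u_j$ outside the window is covered by $s_1,\dots,s_{k_1}$, not by fresh APs). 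Getting the off-by-one bookkeeping between "number of terms of $s_h$ in $[u_i,u_j]$", "$D/d_h$", and the parameter $t$ in the lemma straight is the fiddly part; everything else is pigeonhole and the uniqueness of inclusion-maximal APs.
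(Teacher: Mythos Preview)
Your double pigeonhole is correct (and the arithmetic $(k-k_1)(k_1+1)+1 \le k^2+1$ checks out), but there is a real gap in the reduction to Lemma~\ref{lem:covering} that is more than bookkeeping. You take $s_0$ to be the segment of $s_h$ starting at $u_i$ and assert its first $t$ terms are ``covered by the other APs in the solution, each of which has an element larger than $s_0(t)$.'' But $s_0(0)=u_i\in X\setminus C$, so $u_i$ is not covered by $s_1,\dots,s_{k_1}$; and if you instead use $s_{k_1+1},\dots,s_k$ (or all $k$ APs), then either $s_h$ itself covers $s_0(t)=u_j$, or some $s_j$ with $j>k_1$, $j\neq h$ may well stop inside $(u_i,u_j)$ --- you only controlled the stopping points of $s_1,\dots,s_{k_1}$. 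So no choice of covering family satisfies the lemma's hypotheses as you have set things up.

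The missing idea is to take $u_i,u_j$ \emph{consecutive} in $B\cap s_h$. The paper does exactly this, running the pigeonhole in the opposite order: first pick $s_h$ with at least $k+1$ hits in $B$ (pigeonhole over the $\le k$ remaining APs), then among the $\ge k$ gaps between consecutive hits find one containing no stopping point of $s_1,\dots,s_{k_1}$ (there are only $k_1<k$ of those). With $u_i,u_j$ consecutive in $B\cap s_h$, every element of $s_h$ strictly between them lies in $C$: it is in $X$, below $u_{k^2+1}$, and not in $B\cap s_h$, hence not in $B$. Now apply Lemma~\ref{lem:covering} with $s_0$ the segment of $s_h$ on $(u_i,u_j]$ and with covering APs only those $s_i$, $i\le k_1$, that meet $(u_i,u_j)$: they cover $s_0(0),\dots,s_0(t-1)$, none covers $s_0(t)=u_j\notin C$, and each has an element beyond $u_j$ since none stops in the gap. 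This yields $(u_j-u_i)/d_h\le 2^k$ as needed. Your block-first pigeonhole is compatible with this fix --- simply take the closest pair in $s_h$ within the chosen block --- but the consecutiveness is what makes the lemma applicable, and it is not just an off-by-one issue.
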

\begin{claimproof}
    Consider the set $B$ consisting of the $k^2+1$ smallest elements of $X\setminus C$. By the pigeonhole principle, there is an $h\in \{j+1,\dots, k\}$ such that $s_h$ covers at least $k+1$ elements of $B$. Thus, there are two consecutive elements of $B\cap s_h$, $s_h(\alpha)$ and $s_h(\beta)$, between which no AP $s_1,\ldots,s_{k_1}$ ends. Without loss of generality, we may assume that $s_h(\alpha)$ and $s_h(\beta)$ are the closest such pair (i.e. the pair such that $|s_h(\alpha)-s_h(\beta)|$ is minimal). 
    By applying Lemma~\ref{lem:covering}, we conclude that there are at most $2^k-1$ elements of $s_h$ between $s_h(\alpha)$ and $s_h(\beta)$. In other words, the difference of $s_h$ is a divisor of $s_h(\beta)-s_h(\alpha)$ and at least $(s_h(\beta)-s_h(\alpha))/2^k$. Therefore, in some iteration of the for-loops we get $s=s_h$.\qed
\end{claimproof}
    Using the above claim, it directly follows by induction on $k_2=0,\ldots,k$ that if there is a covering $s_1,\ldots,s_k$ of $X$ such that $\cup_{i=1}^{k_1}s_i=C$, then the function $\textsc{Covering}(C, k-k_2,k_2)$ returns true. 

    Let us now analyse the running time of the above algorithm.
    The recursion tree has height $k$ (since we reduce $k$ by one on every level). The maximum degree is $2^k k^4$, so the total number of nodes is $2^{O(k^2)}$. The running time at each node is at most $2^{O(k^2)}poly(n)$. Therefore, the overall running time is $2^{O(k^2)}poly(n)$. 
\end{proof}
\section{FPT Algorithm for Exact Cover by Arithmetic Progressions}
\label{sec:part}
Now we show that \xap is Fixed Parameter Tractable as well. Note that this problem is quite different in character than \capp. For example, as opposed to \capp, in \xap we cannot describe an AP with only one element and its difference. Namely, it is not always optimal to take the longest possible AP: for example, if $X=\{0, 4, 6, 7, 8, 9\}$, the optimal solution uses the progression $0,4$ rather than $0,4,8$. While we still apply a recursive algorithm, we significantly need to modify our structure lemma (shown below in Lemma~\ref{lm:exactmain}) and the recursive strategy.

Before we proceed with presenting the FPT algorithm for \ppc (\textsc{XCAP}) we state an auxiliary lemma.

\begin{restatable}{lemma}{lmpart}
%\begin{lemma}
\label{lm:exactmain} Let $s_1, s_2, \dots, s_k$ be a solution of an instance of \xap with input set $X$.  Let $s$ be an AP that is contained in $X$. For each $i\in [k]$, we denote by $t_i$ the intersection of $s$ and $s_i$. 
Suppose that for some $i$, $t_i$ has at least $k+1$ elements. Then there are at most $2^{k-1}-1$ elements of $s$ between any two consecutive elements of $t_i$.
%If $t_i$ has at least $k+1$ elements then between any two consecutive elements of $t_i$ there are at most $2^{k-1}-1$ elements from the arithmetic progression $s$.
%\end{lemma}
\end{restatable}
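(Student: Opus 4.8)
The plan is to mimic the proof of Lemma~\ref{lem:covering}, but to exploit the disjointness of the exact cover to control \emph{which} APs cover the points of $s$ lying strictly between two consecutive elements of $t_i$. Fix the index $i$ with $|t_i|\geq k+1$, and let $p=s(\alpha)$ and $q=s(\beta)$ be two consecutive elements of $t_i$; set $m=\beta-\alpha$, so there are exactly $m-1$ elements of $s$ strictly between $p$ and $q$, and the goal is to show $m\leq 2^{k-1}$. Since $t_i=s\cap s_i$ is an AP, its elements are equally spaced inside $s$: every two consecutive elements of $t_i$ are exactly $m$ indices of $s$ apart, so $t_i$ splits $s$ into at least $k$ ``gaps'', each containing $m-1$ elements of $s$. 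Moreover, because $s_1,\dots,s_k$ is an exact cover and $s\subseteq X$, every element of $s$ strictly between $p$ and $q$ lies in exactly one $s_j$, and that index cannot be $i$ (as $s\cap s_i=t_i$); hence these $m-1$ points are covered by the $k-1$ APs $\{s_j:j\neq i\}$.

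Next I would transfer the situation into the index space of $s$, exactly as in the proof of Lemma~\ref{lem:covering}: for $j\neq i$ the set $U_j=\{\ell:\ s(\ell)\in s_j\}$ is an AP (by Observation~\ref{obs:interesection} together with the computation carried out in Lemma~\ref{lem:covering}), and the $U_j$ with $j\neq i$ together cover the interval of $m-1$ indices strictly between $\alpha$ and $\beta$, while avoiding $\alpha$ and $\beta$ themselves. If $m-1\geq 2^{k-1}$, then translating this interval to $\{1,\dots,m-1\}\supseteq\{1,\dots,2^{k-1}\}$ and applying Theorem~\ref{thm:2k} with $k-1$ APs would force the infinite extensions of the $U_j$ to cover every positive integer, in particular the index corresponding to $q$. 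As in Lemma~\ref{lem:covering}, this would force $q\in s_j$ for some $j\neq i$, contradicting $q\in s_i$ and disjointness, \emph{provided} the finite AP $s_j$ actually extends to the right past $q$.

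The crux, and the place where the hypothesis $|t_i|\geq k+1$ is used, is precisely this proviso. An AP $s_j$ with $j\neq i$ all of whose elements lie strictly between two consecutive elements of $t_i$ is ``trapped'' in a single gap and produces no contradiction on its own; but each such trapped AP occupies a distinct gap, there are at most $k-1$ of them, and there are at least $k$ gaps, so some gap $G^*$ with endpoints $p^*,q^*\in t_i$ contains no trapped AP. For that gap every AP covering one of its interior points has an element outside $[p^*,q^*]$, i.e.\ an element $>q^*$ or an element $<p^*$. If all of them have an element $>q^*$, then Lemma~\ref{lem:covering}, applied with $s_0$ the sub-AP of $s$ starting at $s(\alpha^*+1)$ and with the (at most $k-1$) covering APs in the role of $s_1,\dots,s_k$, immediately yields that the number $m-1$ of interior points is $<2^{k-1}$. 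Otherwise I would partition the covering APs of $G^*$ into the set extending past $q^*$ and the set extending past $p^*$ (but not past $q^*$), apply Lemma~\ref{lem:covering} to the first set on the maximal prefix of $G^*$ it covers and the mirror image of Lemma~\ref{lem:covering} to the second set on the complementary suffix, and combine the two bounds $2^{|J_R|}$ and $2^{|J_L|}$ with $|J_R|+|J_L|\leq k-1$ to get $m-1\leq 2^{k-1}-1$.

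I expect this final combination step to be the main technical obstacle: one must argue, using that $G^*$ has no trapped AP and that $p^*,q^*\in s_i$ is disjoint from all other $s_j$, that the ``right-extending'' APs cover an initial segment of the gap and the ``left-extending'' APs the complementary terminal segment with no harmful interleaving (an $\mathcal{L}$-type AP covering two consecutive elements of $s$ inside $G^*$ would be forced to reach $p^*$, etc.). This careful bookkeeping is exactly what produces the precise constant $2^{k-1}-1$ rather than a cruder bound; everything else is a direct adaptation of the proof of Lemma~\ref{lem:covering}.
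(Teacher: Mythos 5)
Your setup (the equal spacing of $t_i$ inside $s$, disjointness forcing the gap interiors to be covered by the other APs, and the contrapositive use of Lemma~\ref{lem:covering}) matches the paper's, but the way you try to finish is different and contains a genuine gap --- one you partly flag yourself, since the ``combination step'' is only described, not proved. Your plan is to pigeonhole ``trapped'' APs so as to find a single gap $G^*$ with no trapped AP, and then to bound that one gap by splitting its covering APs into right-extending and left-extending ones, applying Lemma~\ref{lem:covering} to a prefix and its mirror to a suffix. The ``no harmful interleaving'' property this needs is false, and in fact no purely local, single-gap argument of this kind can work: the only facts you use about $G^*$ (its endpoints lie in $s_i$, it is covered by at most $k-1$ pairwise disjoint APs, none of them trapped) are consistent with an interior larger than $2^{k-1}-1$. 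Concretely, let $s$ have difference $1$, let the gap endpoints be $p^*=0$ and $q^*=5$, and cover the interior $\{1,2,3,4\}$ by the two disjoint APs $\{-1,1,3\}$ (left-extending) and $\{2,4,6\}$ (right-extending); both avoid $0$ and $5$, neither is trapped, the coverage is fully interleaved, and the interior has $4>2^{k-1}-1=3$ elements for $k=3$. Such a configuration cannot occur inside an actual solution with $|t_i|\geq k+1$, but ruling it out requires looking at the other gaps, which your argument never does after the pigeonhole step.

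The paper's proof is the global version of your opening observation: since all gaps contain the same number of elements of $s$, assume for contradiction that this number is at least $2^{k-1}$ and walk through the $k$ gaps determined by the first $k+1$ elements of $t_i$. In each gap the interior is covered by APs among $s_2,\dots,s_k$ while the right endpoint, being in $s_i$, is covered by none of them; so Lemma~\ref{lem:covering}, applied in contrapositive form to the APs that are still ``alive'', forces one of them to stop before that endpoint. A stopped AP contributes nothing to later gaps, so each gap consumes a fresh AP; after $k-1$ gaps all of $s_2,\dots,s_k$ have stopped, and the elements of $s$ in the last gap are uncovered, a contradiction. This iteration over all $k$ gaps is precisely where the hypothesis $|t_i|\geq k+1$ is used, and it is the piece your single-gap combination cannot replace.
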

\begin{proof} Without loss of generality, we may assume that $i=1$. Note that $t_1$ is a subset of $s$ and AP. Hence between any two consecutive members of $t_1$ we have the same number of elements of $s$.
	Let us consider the first $k+1$ members of $t_1$, denoted by $t_1(0), t_1(1), \dots, t_1(k)$.
	For any $j\in [0, k-1]$ all elements of $s$ between $t_1(j)$ and $t_1(j+1)$ must be covered  by $s_2, s_3, \dots, s_k$. Therefore, by Lemma~\ref{lem:covering} one of the arithmetic progressions $s_2, s_3, \dots, s_k$ must stop before $t_1(1)$. Similarly, another AP from $s_2, s_3, \dots, s_k$ must stop before $t_1(2)$. By repeatedly applying this argument, we conclude that all APs $s_2, s_3, \dots, s_k$ must stop before $t_1(k-1)$. Hence, the elements of $s$ between $t_1(k-1)$ and $t_1(k)$ are uncovered which leads to a contradiction.
\end{proof}

Now we have all ingredients to prove the main theorem of this section.

\thmpartitioning*
\begin{proof}
Let $X=\{a_1,\dots, a_n\}$ be the input set. Without loss of generality, we may assume that $a_1<\dots<a_n$.
Assume that the input instance has a solution with $k$ APs: $o_1,\dots, o_k$. Let $d'_i$ be the difference of $o_i$. 
    Our algorithm $\textsc{partition}$ recursively calls itself and is described in Algorithm~\ref{alg:part}. The algorithm has the following parameters: $T_1,\dots, T_k, P_1,\dots, P_k, d_1, d_2, \dots, d_k$. The sets $T_i$ describe the elements that are in $o_i$ (i.e. the elements that are definitely covered by $o_i$). The integer $d_i$ is either $0$ or equal to the guessed value of the difference of the AP $o_i$. Once we assign a non-zero value to $d_i$, we never change it within future recursive calls. We denote by $P_i$ the set of "potentially covered" elements. Informally, $P_i$ consists of elements that could be covered by $o_i$ unless $o_i$ is interrupted by another AP. Formally, if for an AP $o_i$ we know two elements $a,b\in o_i$ ($a<b$), and the difference $d_i$ then $P_i =\{b+d_i,b+2d_i, \dots b+\ell d_i\}$ where $\ell$ is the largest number such that: $\{b+d_i,b+2d_i, \dots b+\ell d_i\}\subset X$ and none of the elements of the set $\{b+d_i,b+2d_i, \dots b+\ell d_i\}$ belong to $T_j$ for some $j\neq i$, $j\in [k]$.

\begin{algorithm}[h]
    \caption{Algorithm for \textsc{XCAP}}
    \label{alg:part}
    \begin{algorithmic}[1]
    \State \text{\textbf{Algorithm} $\textsc{Partition}(X,T_1,\dots T_k,P_1,\dots, P_k, d_1, \dots, d_k)$}
    %\State Let $k=k_1+k_2$
    \If{there are $i,j \in [k]$ s.t. $i \neq j$ and $P_i\cap P_j\neq \emptyset$}
    \State Let $c$ be the smallest element s.t. $\exists i,j$, $c\in P_i \cap P_j$, and $i<j$
    \State $\textsc{Partition}(X,T_1,\dots, T_k, P_1,\dots, P_{i-1}, P_i^{<c}, P_{i+1}, \dots, P_k,d_1, \dots, d_k)$
    \State $\textsc{Partition}(X,T_1,\dots, T_k, P_1, \dots, P_{j-1}, P_j^{<c}, P_{j+1},\dots, P_k,d_1, \dots, d_k)$
    \ElsIf{there is an $i$ such that $T_i=\{a_\alpha,a_\beta \}$ and $d_i=0$}
        \For{$(b_1, \dots, b_k) \in \{0,1,\dots, 2^k+1\}^k$}
            \State $g\gets gcd(a_\beta-a_\alpha, b_1d_1,\dots b_kd_k)$
            \State $D_i\gets\{g, \frac{g}{2}, \dots, \frac{g}{k(k+1)}\}$
            \For{$d \in D_i$ and $d$ is integer}
                \State $d_i \gets d$
                \State $T_i\gets \{a_\alpha, a_\alpha+d_i, \dots, a_\beta\}$
                \State $C^{\infty} \gets \{a_\beta+d_i,a_\beta+2d_i, \dots\}$
                \State $P_i\gets C^{\infty} \sqcap X$ 
                %\If{ for some $j\in [k]\setminus \{i\}$ we have $T_i\cap T_j \neq \emptyset$} 
                %    \State abort (go to the next value of $d$)
                %\EndIf
                %\For{$j\in [k] \setminus \{i\}$}
                %    \State $P_j\gets \textsc{Update}(P_j,T_i)$
                %\EndFor
                %\State $\textsc{Partition}(X,T_1,\dots,T_k,P_1,\dots, P_k,d_1, \dots, d_k)$
                %%%%%%Changed
                \If{ for all $j\in [k]\setminus \{i\}$ we have $T_i\cap T_j = \emptyset$} 
                \For{$j\in [k] \setminus \{i\}$}
                    \State $P_j\gets \textsc{Update}(P_j,T_i)$ \Comment{remove ints larger than min($P_j\cap T_i$)}
                \EndFor
                \State $\textsc{Partition}(X,T_1,\dots,T_k,P_1,\dots, P_k,d_1, \dots, d_k)$
                \EndIf
                %%%%%%%Changed
            \EndFor
        \EndFor
    \ElsIf{$X \setminus( P\cup T)=\emptyset$}
        \State \textbf{return} $o_1=T_1\cup P_1,\dots, o_k=T_k\cup P_k$
    \Else
        \State $a_\beta\gets min(X\setminus (T\cup P))$ 
        \If{there is an $i$ such that $|T_i|<2$}
            \State $J\gets\{i| i\in [k] \text{ and } |T_i|<2\}$
            \For{$i \in J$}
               \State $T_i\gets T_i\cup \{a_\beta\}$
               \State $\textsc{Partition}(X,T_1, \dots, T_k, P_1, \dots, P_k,d_1,\dots, d_k)$        
           \EndFor
        \Else 
            \State \textbf{abort} \Comment{this branch does not generate a  solution}
        \EndIf
    \EndIf 
    \end{algorithmic}
    \end{algorithm}
   
Initially, we call the algorithm $\textsc{partition}$ with parameters $T_1=\{a_1\}$, $T_2=\dots=T_k=P_1=\dots=P_k=\emptyset, d_1=d_2=\dots=d_k=0$.
    We denote by $T=\cup_i T_i$ and $P=\cup_i P_i$. Let $a_\beta$ be the smallest element of the input sequence that does not belong to $T\cup P$. For each $i\in [k]$ such that $|T_i|\leq 1$, we recursively call $\textsc{partition}(T_1, \dots, T_{i-1}, T_i', T_{i+1}, \dots, T_k, P_1, \dots, P_k)$, where $T_i'=T_i\cup\{a_\beta\}$. In other words, we consider all possible APs that cover $a_\beta$. We do not assign $a_\beta$ to $i$-th AP with $|T_i|\geq 2$ as for such APs we know that $a_\beta\not \in o_i$ (since $a_\beta\not\in P_i$).
 
    If in the input for some $i$ we have $|T_i|=2$ and $d_i=0$ then we branch on the value of the difference of the $i$-th AP. Let $T_i=\{a_\alpha, a_\beta\}$, where $a_\beta>a_\alpha$. By construction (specifically, by the choice of $a_
    \beta$), all elements of the sequence $B=\{a_{\alpha+1}, a_{\alpha+2},\dots, a_{\beta-1}\}$ belong to $P\cup T$, i.e. can be covered by at most $k$ APs. Note that purely based on the knowledge of two elements $a_\alpha, a_\beta \in o_i$ we cannot determine immediately the difference of $o_i$, because $a_{\alpha}, a_{\beta}$ might not be consecutive elements of $o_i$. In other words, it could happen that some elements of $B$ belong to $o_i$. Therefore we need to consider cases when potentially covered elements from some $P_j$ actually belong to $o_i$ instead of $o_j$. We note that the number of elements between $a_\alpha$ and $a_\beta$ can be very large, so a simple consideration of all cases where the difference of $o_i$ is a divisor of $a_\beta-a_\alpha$ will not provide an FPT-algorithm. 
    
    Instead of considering all divisors we use Lemma~\ref{lm:exactmain} and bound the number of candidates for value of the difference of $o_i$. 
    %
    %For convenience, we will use divisor notation $y|z$ to denote ($y$ divides $z$) and $gcd(y_1,y_2,\ldots)$ for the greatest common divisor of $y_1,y_2,\ldots,$. 
    %
    Note that we allow integers to be $0$ here, and define that all integers are a divisor of $0$. Hence $a|0$ is always true and $gcd(0,y_1,y_2,\dots)=gcd(y_1,y_2,\dots)$.
    
    Informally, for all known differences $d_j$ up to this point (i.e. all $d_j\neq 0$) we branch on the smallest positive value $b_j$ such that $d'_i \vert b_j d_j$, where $d'_i$ is the difference of the $i$-th AP in the solution that we want to find. We treat all cases when $b_j>2^k+1$ at once, and instead of the actual value we assign $0$ to a variable responsible for storing value of $b_j$. 
    Intuitively, $b_j$ describes the number of elements of $o_j$ between two consecutive interruptions by $o_i$. By Lemma~\ref{lm:exactmain}, if $b_j$ is large (larger than $2^k+1$), each of these interruptions implies that an AP stops. 
    
    Formally, for each $k$-tuple $(b_1,\dots, b_k)\in \{0,\dots, 2^k+1\}^k$, we do the following. Let $g=gcd(a_\beta-a_\alpha, b_1d_1,\dots, b_kd_k)$. From the definition of $b_j$ it follows that $d'_i\vert g$.
%
    % and instead assign to $b_j$ value 0. 
 %   
    %that an AP $o_j$ is interrupted by $o_i$, i.e. there is an element in $P_j$ that is covered by $o_i$. We will branch on the smallest multiples of $d_1,\dots, d_k$ that are divisible by $d_i$.
%
    \begin{nclaim}\label{claim:divisors}
     If all previous branchings are consistent with a solution $o_1, \dots, o_k$ then $d'_i\geq \frac{g}{k(k+1)}.$ 
    \end{nclaim}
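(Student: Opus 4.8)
The plan is to reduce the claim to the purely numerical statement $g/d'_i \le k(k+1)$. Since $a_\alpha, a_\beta \in o_i$ we have $d'_i \mid a_\beta - a_\alpha$, and by the very choice of $b_j$ we have $d'_i \mid b_j d_j$ for every $j$; hence $d'_i \mid g$ and $g/d'_i$ is a positive integer, so $d'_i \ge g/(k(k+1))$ is equivalent to $g/d'_i \le k(k+1)$. First I would rewrite $g$. For $j$ with $d_j \neq 0$ consistency gives $d_j = d'_j$, and the smallest positive $b$ with $d'_i \mid b d'_j$ is $\hat b_j := d'_i/\gcd(d'_i,d'_j)$; the algorithm uses $b_j = \hat b_j$ if $\hat b_j \le 2^k+1$ and $b_j = 0$ otherwise. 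Setting $J := \{j : d_j \neq 0\}$, $J' := \{j \in J : \hat b_j \le 2^k+1\}$, $c_j := d'_j/\gcd(d'_i,d'_j)$ and $m := (a_\beta-a_\alpha)/d'_i$, one has $b_j d_j = d'_id'_j/\gcd(d'_i,d'_j)$ for $j \in J'$ and $b_j d_j = 0$ otherwise, so $g = \gcd\bigl(a_\beta-a_\alpha,\ \{d'_id'_j/\gcd(d'_i,d'_j) : j \in J'\}\bigr)$ and, dividing by $d'_i$, $D := g/d'_i = \gcd(m, \{c_j : j \in J'\})$. In particular $D \mid m$ and $D \mid c_j$ for all $j \in J'$.

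Next I would set up the combinatorial structure. Let $s := \{a_\alpha, a_\alpha+d'_i, \dots, a_\beta\}$, a sub-AP of $o_i$ contained in $X$ with elements $s(\ell) = a_\alpha + \ell d'_i$ for $0 \le \ell \le m$. One first checks from the algorithm that when $a_\beta$ was added to $T_i$ it was the minimum of $X\setminus(T\cup P)$, the $P_\ell$ were pairwise disjoint, and $P_i = \emptyset$ (since $d_i = 0$). Every internal element $s(\ell)$, $1 \le \ell \le m-1$, lies in $(a_\alpha,a_\beta) \subseteq T \cup P$; being in $o_i$ it lies in no $T_\ell$ with $\ell \neq i$ (as $T_\ell \subseteq o_\ell$ and the $o_\ell$ are disjoint) and not in $T_i = \{a_\alpha,a_\beta\}$, hence it lies in exactly one $P_j$, necessarily with $j \in J$. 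Thus the sets $S_j := \{\ell \in \{1,\dots,m-1\} : s(\ell) \in P_j\}$ partition $\{1,\dots,m-1\}$ over $j \in J$, and since $P_j$ is a contiguous block of the progression of difference $d'_j$, each $S_j$ is a contiguous arithmetic progression of common difference $c_j$. The crucial point, via Lemma~\ref{lm:exactmain}, is that $|S_j| \le k$ for every $j \in J\setminus J'$: indeed $|S_j| \le |P_j \cap o_i|$, and if $|P_j\cap o_i| \ge k+1$ then applying Lemma~\ref{lm:exactmain} to the solution $o_1,\dots,o_k$ and the AP $P_j \subseteq X$ (with the index $i$) yields that between consecutive elements of $P_j\cap o_i$ there are at most $2^{k-1}-1$ elements of $P_j$; but $P_j\cap o_i$ has difference $d'_id'_j/\gcd(d'_i,d'_j)$ while $P_j$ has difference $d'_j$, so that number equals $\hat b_j - 1$, forcing $\hat b_j \le 2^{k-1} \le 2^k+1$ and contradicting $j \notin J'$.

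Finally I would close by counting. If $m \le k(k+1)$ then $D \mid m$ gives $D \le k(k+1)$ immediately, so assume $m > k(k+1)$. The at most $k-1$ progressions $S_j$ with $j \in J\setminus J'$ together cover at most $k(k-1)$ of the integers in $\{1,\dots,m-1\}$ (each has size $\le k$ and there are $\le |J| \le k-1$ of them), so $\bigcup_{j\in J'}S_j$ covers at least $(m-1)-k(k-1) \ge 2k \ge 1$ of them. On the other hand each $S_j$ with $j \in J'$ has difference $c_j$ divisible by $D$, hence lies in a single residue class mod $D$; so $\bigcup_{j\in J'}S_j$ meets at most $|J'| \le k-1$ classes mod $D$, and since $D \mid m$ each class contains at most $m/D$ of $\{1,\dots,m-1\}$. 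Therefore $(m-1)-k(k-1) \le (k-1)m/D$, i.e. $D \le (k-1)m/\bigl(m-(k^2-k+1)\bigr)$; using $(k+1)(k^2-k+1)=k^3+1$ a short calculation shows this bound is $< k(k+1)$ whenever $m > k(k+1)$, which finishes the proof of Claim~\ref{claim:divisors}.

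I expect the main obstacle to be the middle paragraph. The delicate insight is that the elements of $s$ strictly between $a_\alpha$ and $a_\beta$ — which in the true solution are covered by $o_i$ itself, so the genuine solution APs give no information about $d'_i$ — are nonetheless forced to be distributed among the partial progressions $P_j$, and that the structure lemma yields real leverage precisely for the ``large $b_j$'' indices (those being exactly the $P_j$ that meet $o_i$ in few points). The second point needing care is the bookkeeping in the formula for $g$: one must distinguish $\hat b_j = d'_i/\gcd(d'_i,d'_j)$, counted in steps of $o_j$, from $c_j = d'_j/\gcd(d'_i,d'_j)$, counted in steps of $o_i$ (equivalently, the common difference of $S_j$), which is why the restriction ``$\hat b_j \le 2^k+1$'' and the gcd with ``$c_j$'' end up appearing together.
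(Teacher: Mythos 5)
Your proof is correct, and although it is powered by the same structural tool as the paper --- Lemma~\ref{lm:exactmain} applied to a partial progression that meets $o_i$ many times --- the way you extract the bound is genuinely different. The paper argues locally and by contradiction: if $d'_i=g/t$ with $t>k(k+1)$, then the window $(a_\alpha,a_\alpha+g)$ contains at least $k(k+1)$ elements of $o_i$, which (exactly as in your middle paragraph) must lie in the sets $P_j$, $j\neq i$; pigeonholing gives one $P_q$ with at least $k+1$ common elements with $o_i$, Lemma~\ref{lm:exactmain} bounds the gap $e\leq 2^{k-1}-1$, and since the minimal $b$ with $d'_i\mid b d_q$ divides $e+1\leq 2^k+1$, the branched $b_q$ yields $g\mid (e+1)d_q<g$, a contradiction. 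You instead compute $D=g/d'_i=\gcd(m,\{c_j: j\in J'\})$ explicitly, use Lemma~\ref{lm:exactmain} only in contrapositive form (a ``large-$\hat b_j$'' progression meets $o_i$ in at most $k$ points), and then bound $D$ globally by noting that the remaining index sets $S_j$, $j\in J'$, each occupy a single residue class mod $D$, so at most $k-1$ classes must absorb all but $k(k-1)$ of the $m-1$ interior positions; the closing algebra with $(k+1)(k^2-k+1)=k^3+1$ replaces the paper's one-line divisibility contradiction. The trade-off: the paper's argument is shorter, while yours makes fully explicit the bookkeeping the paper only asserts (every element of $o_i$ strictly between $a_\alpha$ and $a_\beta$ lies in exactly one $P_j$ with $d_j\neq 0$, using $P_i=\emptyset$, pairwise disjointness of the $P$'s at this node, $T_\ell\subseteq o_\ell$, and the distinction between $\hat b_j$ and $c_j$ in the formula for $g$), and it even gives the slightly sharper strict bound $D<k(k+1)$ when $m>k(k+1)$ --- though both routes land on exactly the $k(k+1)$ guarantee required by Claim~\ref{claim:divisors}.
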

    \begin{claimproof}
    Indeed, if $d'_i=g/t$ and $t>k(k+1)$ then between $a_\alpha$ and $a_\alpha+g$ there are at least $k(k+1)$ elements from $o_i$. All these elements are covered by sets $P_1, \dots , P_{i-1}, P_{i+1}, \dots, P_k$. Therefore, by the pigeonhole principle there is an index $q$ such that $P_q$ contains at least $k+1$ elements. This means that $P_q$ and $o_i$ have at least $k+1$ common elements ($P_q$ is intersected by $o_i$ at least $k+1$ times). Denote these common elements by $c_1,\dots, c_t$ and let $e$ be the number of elements of $P_q$ between $c_\ell$ and $c_{\ell+1}$ ($e$ does not depend on $\ell$ as $P_q$ and $o_i$ are APs). First of all, recall that $c_{\ell}, c_{\ell+1}$ are from interval $(a_{\alpha},a_{\alpha}+g)$. Therefore, $(e+1)d_q=c_{\ell+1}-c_{\ell}<g$. Moreover, by Lemma~\ref{lm:exactmain} (applied with $s=P_q$ and $s_i=o_i$), we have $e\leq 2^{k-1}-1$. Taking into account that
    $d'_i\vert (e+1)d_q$ and $e\leq 2^{k-1}-1$ we have that $g$ divides $(e+1)d_q$ which contradicts $(e+1)d_q<g$.\qed 
    \end{claimproof}
    %On the other hand, $(e+1)d_j<g$, as $(e+1)d_j= c_{\ell+1}- c_\ell<g$, which leads to a contradiction.
    
    From the previous claim it follows that $d'_i \in \{g, g/2, \dots, g/k(k+1)\}$ and we have the desired bound on the number of candidates for the value of the difference of $o_i$. We branch on the value of $d'_i$, i.e. in each branch we assign to $d_i$ some integer value from the set  $\{g, g/2, \dots, g/k(k+1)\}$. In other words, for each $d\in \{g, g/2, \dots, g/k(k+1)\}$ we call 
    \[
    \textsc{partition}(T_1,\dots, T_i', \dots, T_k, P_1', \dots, P_k',d_1,\dots, d_{i-1},d,d_{i+1},\dots, d_k),
    \] where $C_d=\{a_{\alpha}, a_{\alpha}+d, \dots, a_{\beta}-d, a_{\beta}\}$, $C_d^{\infty}=\{a_{\beta}+t, a_{\beta}+2t, \dots \}$, $T_i'=T_i\cup C_d$, $P'_i=C_d^{\infty}\sqcap X$ 
    %equals to the maximal suffix of the sequence $C_d^{\infty}$ that is completely contained in $X$, 
    and for $j\neq i$ we set $P_j'=\textsc{Update}(P_j,T_i)$.
    %(see Algorithm~\ref{alg:update}). 
    The function $\textsc{Update}(A, B)$ returns $A^{<x}$, where $x=\min(A\cap B)$ (if $A\cap B=\emptyset$, it returns $A$).
    If in some branch the sequence $a_{\alpha}, a_{\alpha}+d_i, \dots, a_{\beta}-d_i, a_{\beta}$ intersects $T$, we abort this branch. Overall, in order to determine the difference of $o_i$ after discovering $a_\alpha, a_\beta \in o_i$ we create at most  $(2^k+2)^k\cdot k(k+1)=2^{O(k^2)}$ branches.

    Let us now compute the number of nodes in the recursion tree. Observe that we never remove elements from $T$.  
    Consider a path from the root of the tree to a leaf. It contains at most $2k$ nodes of degree at most $k$ (adding two first elements to $T_i$), at most $k^2$ nodes of degree 2 (resolving the intersections of two APs, lines 2-5 in pseudocode) and at most $k$ nodes of degree $2^{O(k^2)}$ (determining the difference of an AP that contains two elements). Hence, the tree has at most $k^{2k} \cdot 2^{k^2}\cdot  (2^{O(k^2)})^k = 2^{O(k^3)}$ leaves. Therefore, the overall number of nodes in the recursion tree and the running time of the algorithm is $2^{O(k^3)} poly(n)$.

    Claim~\ref{claim:divisors} proves that we iterate over all possibilities. Hence, our algorithm is correct.
\end{proof}
\section{Strong NP-hardness of Cover by Arithmetic Progressions in $\mathbb{Z}_p$}
\label{sec:strongnp}
A natural question to ask, in order to prove Strong NP-hardness for \capp, is whether we can replace an input set $X$ with an equivalent set which has smaller elements. Specifically, could we replace the input number with numbers polynomial in $|X|$, while preserving the set of APs?
This intuition can be further supported by result on Simultaneous Diophantine approximation that exactly achieve results in this spirit (though with more general properties and larger upper bounds)~\cite{10.1007/BF02579200}.
However, it turns out that this is not always the case, as we show in Lemma~\ref{lm:mod}. This means that one of the natural approaches for proving strong NP-hardness of \capp does not work: namely, not all sets $X$ can be replaced with set $X'$ of polynomial size which preserves all APs in $X$. 

%First of all we provide some justification of considering modular version instead of classical one. We prove that for any polynomial $p(\cdot)$  there is a set $X=\{x_1,x_2, \dots,x_n\}$ such that there is no set $A=\{a_1,a_2, \dots, a_n\}$ with the following properties:
%\begin{itemize}
 %   \item $a_i\leq p(n)$;
  %  \item a set $\{a_i,a_j,a_k\}$ generates an arithmetic progression if and only if a set $\{x_i, x_j, x_k\}$  generates an arithmetic progression (the order in which $x_i, x_j, x_k$ generate arithmetic progression can be different from the order of $a_i,a_j,a_k$).
%\end{itemize}

%\begin{lemma} 
\begin{restatable}{lemma}{lmmodp}
\label{lm:mod} 
Let $x_1=0$, $x_i=2^{i-2}$ for $i\geq 2$ and $X_n=\{x_1,\dots, x_{n+2}\}$. Then for any polynomial $p$ there exists an integer $n$ such that there is no set $A_n=\{a_1, \dots, a_{n+2}\}$ that satisfies the following criteria:
\begin{itemize}
    \item $a_{i}\leq p(n)$ for all $i\in[n+2]$,
    \item For all $i,j,k\in [n+2]$, the set $\{x_i,x_j,x_k\}$ forms an AP if and only if $\{a_i,a_j,a_k\}$ forms an AP.
\end{itemize}
%\end{lemma}
\end{restatable}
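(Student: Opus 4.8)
The plan is a counting argument: any set $A_n$ that preserves the AP-structure of $X_n=\{0,1,2,4,\ldots,2^n\}$ must have exponentially large diameter, which contradicts the bound $a_i\le p(n)$.

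\textbf{Step 1: the AP-triples of $X_n$.} First I would determine exactly which triples $\{x_i,x_j,x_k\}$ form a $3$-term AP. A triple $\{u,v,w\}$ with $u<v<w$ is an AP iff $u+w=2v$. If all three are powers of $2$, say $2^a<2^b<2^c$, then $2^a+2^c=2^a(1+2^{c-a})$ has odd part at least $3$, so it is never a power of $2$; hence no such triple is an AP. If one element is $0$, then $\{0,2^a,2^b\}$ with $a<b$ is an AP iff $2^b=2^{a+1}$, i.e. $b=a+1$. Thus the AP-triples of $X_n$ are precisely $\{x_1,x_j,x_{j+1}\}$ for $j=2,\ldots,n+1$. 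Consequently, any $A_n=\{a_1,\ldots,a_{n+2}\}$ meeting the hypothesis must have $\{a_1,a_j,a_{j+1}\}$ an AP for every $j\in\{2,\ldots,n+1\}$; note I do not need the converse direction, since the extra non-AP constraints could only force $A_n$ to be larger.

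\textbf{Step 2: multiplicative structure.} Put $c_\ell:=a_{\ell+1}-a_1$ for $\ell=1,\ldots,n+1$; these are $n+1$ distinct nonzero integers, and the condition above says $\{0,c_\ell,c_{\ell+1}\}$ is an AP for $\ell=1,\ldots,n$. For a genuine $3$-element set $\{0,x,y\}$ to be an AP, one of $0,x,y$ is the average of the other two, which forces $y/x\in\{2,\tfrac12,-1\}$. Hence $|c_{\ell+1}|/|c_\ell|\in\{2,\tfrac12,1\}$, so each $|c_\ell|$ equals $|c_1|\cdot 2^{m_\ell}$ for an integer $m_\ell$ with $m_1=0$ and $|m_{\ell+1}-m_\ell|\le 1$.

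\textbf{Step 3: distinctness forces a wide exponent range.} Since the $c_\ell$ are distinct, any nonzero value is attained by at most two of them (namely $v$ and $-v$), so among $c_1,\ldots,c_{n+1}$ there are at least $\lceil(n+1)/2\rceil$ distinct absolute values. All of these lie in $\{|c_1|2^m : m_{\min}\le m\le m_{\max}\}$, a set of size $m_{\max}-m_{\min}+1$, so $m_{\max}-m_{\min}\ge \lceil(n+1)/2\rceil-1\ge (n-1)/2$. Therefore $\max_\ell|c_\ell|=2^{m_{\max}-m_{\min}}\cdot\min_\ell|c_\ell|\ge 2^{(n-1)/2}$, using $\min_\ell|c_\ell|\ge 1$ for nonzero integers. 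Thus $A_n$ contains two elements ($a_1$ and some $a_{\ell+1}$) at distance at least $2^{(n-1)/2}$; as the elements are non-negative integers bounded by $p(n)$, this yields $2^{(n-1)/2}\le p(n)$, which is false once $n$ is large enough. Choosing such an $n$ completes the proof.

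\textbf{Main obstacle.} The delicate point is Step 3: one must see that distinctness of the $c_\ell$ does more than exclude small configurations — it forces the exponents $m_\ell$ to spread over a range of size $\Omega(n)$ rather than oscillating within a bounded window, and it is precisely this spread that produces the exponential bound. The rest is routine, modulo reading the hypothesis "$a_i\le p(n)$" as "$0\le a_i\le p(n)$" (as in the definition of \capp), which is needed since translating $X_n$ would otherwise give a counterexample.
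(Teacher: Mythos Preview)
Your proof is correct and follows essentially the same route as the paper's: identify the AP-triples $\{x_1,x_j,x_{j+1}\}$, deduce that consecutive ratios $c_{\ell+1}/c_\ell$ lie in $\{2,\tfrac12,-1\}$, and use distinctness plus pigeonhole to force the exponents $m_\ell$ to spread over a range of order $n$, yielding an exponential lower bound on $\max_i|a_i-a_1|$. Your write-up is in fact slightly cleaner than the paper's --- you obtain the sharper constant $2^{(n-1)/2}$ in place of the paper's $2^{n/8}$, and you correctly note that only the forward AP-preservation direction is needed (the paper spends a paragraph deriving distinctness of the $a_i$ from the converse direction, which is redundant once $A_n$ is assumed to be a set).
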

\begin{proof}
	Assume for contradiction that there is a polynomial $p$ such that for any $n$, we can construct a set $A_n$ with elements smaller than $p(n)$ which preserves the APs of size 3 in $X_n$. Without loss of generality, we may assume that $a_1=0$ (by subtracting $a_1$ from all elements of $A_n$ and using $2p$ instead of $p$).
	%Note that if we replace $p$ with $p'=2p$ and the set $A_n$ with the set $A_n'=A_n-a_1=\{0,a_2-a_1, \dots, a_{n+2}-a_1\}$ this still will be a valid counterexample if $p$ and $A$ were. Hence without loss of generality we can assume that  $a_1=0$.
	
	Clearly if $|X_n|\geq 4$ we must have $a_2\neq 0$. If a set $\{0,a,b\}$ generates an AP, then $b\in \{2a, -a, \frac{a}{2}\}$. We know that $x_1,x_j,x_{j+1}$ generate an AP for each $j\in [2,n-1]$, which implies that for $i\geq 3$, $a_i$ is equal to $a_{i-1}$ multiplied by $2,-1$ or $\frac{1}{2}$.
	%Hence, for example,  $a_p$ is obtained from $a_2$ by a series of multiplications by $2,-1$ or $\frac{1}{2}$.
	Note that we cannot have $q_1< q_2$ such that $a_{q_1}=a_{q_2}$. Indeed, if $q_2>q_1+1$ then $a_1, a_{q_1}, a_{q_2}$ generates an AP while $x_1, x_{q_1}, x_{q_2}$ does not. It is left to consider case $q_2=q_1+1$. If $q_2< n$, $X_n$ contains the AP $x_1,x_{q_2},x_{q_2+1}$, so $\{a_1,a_{q_2},a_{q_2+1}\}$ is an AP. Since  $a_{q_2}=a_{q_1}$, we have that $\{a_1,a_{q_1},a_{q_2+1}\}$ also generates an AP which contradicts the fact that $x_1, x_{q_1}, x_{q_2+1}$ is not an AP. If $q_2 = n$, then instead of the triple $\{a_1,a_{q_2},a_{q_2+1}\}$ we consider the triple $\{a_1,a_{q_1-1},a_{q_1}\}$ and instead of $x_1, x_{q_1}, x_{q_2+1}$ we consider $x_1, x_{q_1-1}, x_{q_2}$ and get the contradiction as $n\geq 4$. Therefore all numbers $a_1, a_2, \dots, a_n$ must be different.
	
	Recall that for each $j>1$ we have either $a_j=2^{p_j}a_2$ or $a_j=-2^{p_j}a_2$ for some integer $p_j$.  As all numbers in the set $\{a_1, a_2, \dots, a_n\}$ are different, we have that there is an index $j$ such that $p_j>\frac{n}{8}$ or $p_j<-\frac{n}{8}$. Since all numbers in $A$ must be integers we conclude that either $a_2>2^{\frac{n}{8}}$ or $a_{p_j}>2^{\frac{n}{8}}$. Therefore we must have $p(n)\geq 2^{\frac{n}{8}}$, which is not true if we take $n$ sufficiently large.  
\end{proof}
    
    %Note that $x_1, x_{2j}, x_{2j+1}$ is arithmetic progression as well as $x_1, x_{2j+1}, x_{2j+2}$. Therefore sets $\{a_1, a_{2j}, a_{2j+1}\}$, $\{a_1, a_{2j+1}, a_{2j+2}\}$ also must generate arithmetic progressions. 
%\end{proof}
Unfortunately, we do not know how to directly circumvent this issue and improve the weak NP-hardness proof of Heath~\cite{heath} to \emph{strong} NP-hardness. Instead, we work with a small variant of the problem in which we work in $\mathbb{Z}_p$. The definition of APs naturally carries over to $\mathbb{Z}_p$. It is easy to see that APs are preserved:
\begin{nclaim}\label{claim:Zp}
    Let $p$ be a prime and let $X$ be a set of integers that forms an AP. Then the multiset $X_p$ generates AP in the field $\mathbb{Z}_p$.
\end{nclaim}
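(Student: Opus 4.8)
The plan is to prove the claim by a direct reduction of the defining recurrence of an arithmetic progression modulo $p$. There is essentially no genuine obstacle here, and in fact primality of $p$ is not used in this direction (it will only be needed later, e.g. for a converse and for the correctness of the hardness reduction); the only point needing a little care is the multiset bookkeeping.

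First I would write $X$ explicitly as an arithmetic progression, say $X=\{a,a+d,a+2d,\dots,a+\ell d\}$ for integers $a$, $d$ and some $\ell\ge 0$, where the cases $\ell=0$ (i.e. $|X|=1$) and $d=0$ are trivial since the constant sequence is an AP in $\mathbb{Z}_p$ by definition. Then I set $b:=a\bmod p$ and $d':=d\bmod p$, both viewed as elements of $\mathbb{Z}_p$.

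Next I would observe that for every $j\in\{0,1,\dots,\ell\}$ we have $a+jd\equiv b+jd'\pmod p$, which is immediate from $a\equiv b$ and $d\equiv d'$ modulo $p$ together with the ring structure of $\mathbb{Z}_p$. Consequently the multiset $X_p=\langle (a+jd)\bmod p:\ j=0,\dots,\ell\rangle$, taken with the order inherited from the progression $X$, is exactly the sequence
\[
b,\ b+d'\ (\bmod\ p),\ b+2d'\ (\bmod\ p),\ \dots,\ b+\ell d'\ (\bmod\ p),
\]
which is by definition an arithmetic progression in $\mathbb{Z}_p$ with start value $b$ and difference $d'$. Hence $X_p$ generates an AP in $\mathbb{Z}_p$.

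The only point requiring (minor) care is the multiset interpretation: since $X_p$ keeps one copy of $x\bmod p$ for each $x\in X$ rather than deduplicating, the length of the sequence above equals $|X|$ and its order is well defined from the order of $X$, so reducing consecutive differences to the single value $d'$ is valid entry by entry even when several of the residues $b+jd'\bmod p$ coincide. This is the entire argument, and the statement is best viewed as a sanity check that the $\mathbb{Z}_p$ variants of \capp and \xcap are genuine relaxations of the integer versions.
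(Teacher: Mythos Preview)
Your proof is correct and follows essentially the same approach as the paper: both argue that the defining arithmetic relation of an AP is preserved under reduction modulo $p$, with the paper phrasing it via consecutive differences ($y-x=z-y$ implies the same congruence mod $p$) and you phrasing it by writing the AP explicitly as $a,a+d,\dots,a+\ell d$ and reducing termwise. Your observation that primality of $p$ is not used in this direction is also correct.
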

\begin{claimproof}
    Note that $y-x=z-y$ implies $(y \bmod p) - (x\bmod p) \equiv_p (z \bmod p) - (y\bmod p)$. Therefore, we have that $X_p$ is an AP in $\mathbb{Z}_p$. 
\end{claimproof}

However, the converse does not hold. Formally, if for some $p$ the multiset $X_p$ is an AP in $\mathbb{Z}_p$ it is not necessarily true that $X$ is an AP in $\mathbb{Z}$. For example, consider $X=\{3,6,18\}$ and $p=3$: we have $X_p=\{0,0,0\}$ which is a (trivial) AP, while $X$ is not an AP. 

We now show strong NP-completeness for the modular variants of \capp and \xcap.
In the \apcmod problem one is given as input an integer $p$ and a set $X \subseteq \mathbb{Z}_p$ and asked to cover $X$ with APs in $\mathbb{Z}_p$ that are contained in $X$ that cover $X$. In \ppcmod we additionally require the APS to be disjoint.

\stronglyNPC*

\begin{proof}
    We recall that Heath~\cite{heath} showed that \capp and \xcap are weakly NP-complete via reduction from \textsc{Set Cover}. Moreover, the instances of \capp and \xcap, obtained after reduction from \textsc{Set Cover}, consist of numbers that are bounded by $2^{q(n)}$ for some polynomial $q(n)$. To show that  \apcmod  and \ppcmod are strongly NP-complete we take a prime $p$ and convert instances of \capp and \xcap with set $S$ into instances of \apcmod and \ppcmod respectively  with a set $S_p$ (we can guarantee that the multiset $S_p$ contains no equal numbers) and modulo $p$. 

    As shown in Claim~\ref{claim:Zp}, under such transformation a \textsc{Yes}-instance is converted into a \textsc{Yes}-instance. However, if we take an arbitrary $p$ then a \textsc{No}-instance can be mapped to a \textsc{Yes}-instance or $S_p$ can become a multiset instead of a set. In order to prevent this, we carefully pick the value of $p$. 

    We need to guarantee that if $X$ is not an AP then $X_p$ also does not generate an AP in $\mathbb{Z}_p$. Suppose $X_p=\{y_1, y_2, \dots, y_k\}$ generates an AP in $\mathbb{Z}_p$ exactly in this order. We assume that $x_i$ maps into $y_i$, i.e. $x_i \equiv_p y_i $. Since $X_p$ is an AP in $\mathbb{Z}_p$, we have
    $$y_2-y_1 \equiv_p  y_3 - y_2 \equiv_p \dots  \equiv_p y_k-y_{k-1}. $$ Since $X$ is not an AP there exists an index $j$ such that
    $x_j-x_{j-1}\neq x_{j+1}-x_j$. Therefore, we have that $2x_j - x_{j-1}-x_{j+1}\neq 0$ and $2y_j - y_{j-1}-y_{j+1} \equiv_p 0$. Since $x_i \equiv_p y_i$ we conclude that $p$ divides $2x_j - x_{j-1}-x_{j+1}$. Hence if we want to choose $p$ that does not transform a \textsc{No}-instance into a \textsc{Yes}-instance, it is enough to choose $p$ such that $p$ is not a divisor of $2x-y-z$ where $x,y,z$ are any numbers from the input and   $2x-y-z\neq 0$. Similarly, if we want $S_p$ to be a set instead of a multiset, then for any different $x,y$ the prime $p$ should not be a divisor of $x-y$. Note that the number of different non-zero values of  $2x-y-z$ and $x-y$ is at most $O(n^3)$. Since all numbers are bounded by $2^{q(n)}$ the values of $2x-y-z\neq 0$ and $x-y$ are bounded by $2^{q(n)+2}$. Note that any integer $N$ has at most $\log N$ different prime divisors. Therefore at most $O(n^3) (q(n)+2)$ prime numbers are not suitable for our reduction. In order to find a suitable prime number we do the following:
    \begin{itemize}
        \item for each number from $2$ to $n^6 (q(n)+2)^2$ check if it is prime (it can be done in polynomial time~\cite{primes}), 
        \item for each prime number $p'\leq n^6(q(n)+2)^2$ check if there are integers $x,y,z$ from the input such that ($2x-y-z\neq 0$ and $2x-y-z\equiv_p 0$) or $x\equiv_p y$ if such $x,y,z$ exist go to the next prime number,
        \item when the desired prime $p'$ is found output instance $S_{p'}$ with modulo $p'$. 
    \end{itemize}

Note that number of primes not exceeding $N$ is at least $\frac{N}{2\log N}$ for large enough $N$. Hence by pigeonhole principle we must find the desired $p'$ as we consider all numbers smaller than $n^6 (q(n)+2)^2$ and $\frac{n^6 (q(n)+2)^2} { \log({n^6 (q(n)+2)^2})} > O(n^3) (q(n)+2)$ for sufficiently large $n$.

Therefore, in polynomial time we can find $p'$ that is polynomially bounded by $n$ and \apcmod with input $(S_{p'}, p')$ is equivalent to \capp  
with input $S$ (similarly for \ppcmod  and  \xcap). Hence, \apcmod and \ppcmod are strongly NP-complete.
    
    \end{proof}

\section{Parameterization Below Guarantee}
\label{sec:belowgua}
\begin{comment}
For any two numbers it is easy to construct an arithmetic progression containing the two numbers. Hence, any $n$ numbers can be covered by $\lceil \frac{n}{2} \rceil$ arithmetic progression. Note that this upper bound is strict as there is no arithmetic progression whose three consecutive terms are from the set $\{1, 2, 4, 8, \dots, 2^{n-1}\}$. So it is natural to consider the following problem (parameterization below guarantee): can we cover given $n$ numbers using at most $\lceil\frac{n}{2}\rceil-k$ arithmetic progressions? We note that similar parameterization for the \textsc{Set Cover} problem was considered in works~\cite{dualsetcover}. 
\end{comment}
In this section we present an FPT algorithm parameterized below guarantee for a problem that generalizes \capp, namely \textsc{$t$-Uniform Set Cover}. 
The \textsc{$t$-Uniform Set Cover} is a special case of the \textsc{Set Cover} problem in which all instances $\mathcal{S},U$  satisfy the property that $\{ A \subseteq U : |A| = t \ \} \subseteq \mathcal{S}$. 
Clearly the solution for the \textsc{$t$-Uniform Set Cover} problem is at most $\lceil\frac{n}{t}\rceil$ where $n$ is the size of universe. Note that \capp is a special case of \textsc{$2$-Uniform Set Cover} since any pair of element forms an AP.
Thus we focus on presenting a fixed parameter tractable algorithm for the \textsc{$t$-Uniform Set Cover} problem parameterized below $\lceil\frac{n}{t}\rceil$ (we consider $t$ to be a fixed constant). We note that for a special case with $t=1$ the problem was considered in works~\cite{dualsetcover,crowston2013parameterized}. 

We will use the deterministic version of color-coding, which uses the following standard tools:

\begin{definition}
For integers $n,k$ a \emph{$(n,k)$-perfect hash family} is a family $\mathcal{F}$ of functions from $[n]$ to $[k]$ such that for each set $S \subseteq [n]$ of size $k$ there exists a function $f \in \mathcal{F}$ such that $\{f(v) : v \in S\}=[k]$.
\end{definition}

\begin{lemma}[\cite{DBLP:conf/focs/NaorSS95}]\label{lem:ph}
For any $n,k\geq 1$, one can construct an $(n,k)$-perfect hash family of size $e^{k}k^{O(\log k)}\log n$ in time $e^{k}k^{O(\log k)}n\log n$.
\end{lemma}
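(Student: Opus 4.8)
The plan is to prove the statement by a two-stage construction, following the splitter machinery of Naor, Schulman and Srinivasan. The key auxiliary object is an \emph{$(n,k,\ell)$-splitter}: a family $\mathcal{F}$ of functions from $[n]$ to $[\ell]$ such that for every $S \in \binom{[n]}{k}$ some $f \in \mathcal{F}$ splits $S$ as evenly as possible among the $\ell$ buckets. Observe that when $\ell = k^2 \geq \binom{k}{2}$, ``as evenly as possible'' forces every bucket to receive at most one element of $S$, so such a splitter is exactly a family that is injective on every $k$-set. The first stage reduces the huge universe $[n]$ to a universe whose size depends only on $k$, and will account for the $\log n$ factor; the second stage builds the actual perfect hash family on the small universe and will account for the $e^{k}k^{O(\log k)}$ factor.

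For the first stage I would construct an $(n,k,k^2)$-splitter of size $k^{O(1)}\log n$ using the number-theoretic prime-modulus trick. Consider the functions $x \mapsto x \bmod p$ for primes $p$. For a fixed $k$-set $S$, a prime $p$ fails to be injective on $S$ only if $p$ divides one of the $\binom{k}{2}$ pairwise differences, each of which is at most $n$ and hence has at most $\log n$ prime factors; thus at most $\binom{k}{2}\log n$ primes are ``bad'' for $S$. Taking the first $\binom{k}{2}\log n + 1$ primes therefore guarantees a good prime, and by the prime number theorem these primes are bounded by $k^{O(1)}\log n \cdot \mathrm{polylog}(n)$. A further hashing step folds the resulting range down to $[k^2]$, yielding a family $\mathcal{F}_1$ of functions $[n] \to [k^2]$, of size $k^{O(1)}\log n$, that is injective on every $k$-set and is constructible in time $k^{O(1)} n \log n$.

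For the second stage I would build a $(k^2,k)$-perfect hash family recursively by balanced binary splitting. To map $[m] \to [k]$ and shatter a $k$-set into singletons, first apply an even two-way splitter $\mathcal{G}$ (an $(m,k,2)$-splitter) that, for the target set, produces a part of size $\lceil k/2\rceil$ and a part of size $\lfloor k/2\rfloor$; then recurse on each part with the disjoint color palettes $\{1,\dots,\lceil k/2\rceil\}$ and $\{\lceil k/2\rceil+1,\dots,k\}$, and glue the two resulting functions together. Composing $\mathcal{F}_1$ with this second-stage family gives the desired perfect hash family, and since the second-stage domain has size $k^2$ independent of $n$, its $\log m = O(\log k)$ factors never multiply $n$; the only $\log n$ dependence comes from $\mathcal{F}_1$, so the total size is $k^{O(1)}\log n$ times the second-stage size.

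The main obstacle is controlling the size of the second-stage family so that it is $e^{k}k^{O(\log k)}$ rather than the naive $2^{k}$ or $k^{k}$. The factor $e^{k}$ is exactly the reciprocal of the probability $k!/k^{k}$ that a uniformly random coloring shatters a fixed $k$-set, and it emerges because the product of the even-split overheads $\binom{k/2^{i}}{k/2^{i+1}}$ across the $\log k$ recursion levels telescopes (via Stirling) to $k^{k}/k! = e^{k(1+o(1))}$, while the extra $k^{O(\log k)}$ absorbs the polynomial and $\log m$ overheads incurred at each of the $\log k$ levels. The technical heart, which I would treat most carefully, is therefore the efficient \emph{derandomized} construction of the even two-way splitters $\mathcal{G}$ of size $\sqrt{k}\,k^{O(1)}\log m$, together with solving the size recursion $P(m,k) = |\mathcal{G}|\cdot P(m,\lceil k/2\rceil)\cdot P(m,\lfloor k/2\rfloor)$ to the claimed bounds in both family size and construction time.
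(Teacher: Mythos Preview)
The paper does not give its own proof of this lemma: it is stated as a black-box citation of Naor, Schulman and Srinivasan and is invoked directly (without argument) inside the proof of Theorem~\ref{thm:uniformset}. Your proposal is a faithful high-level outline of the original NSS splitter construction, so there is nothing on the paper's side to compare it against; in the context of this paper no proof is expected here at all.
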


Now we are ready to state and prove the main result of this section:
\begin{theorem}\label{thm:uniformset}
    There is an $2^{O(k)} poly(n)$-time algorithm that for constant $t$ and a given instance of \textsc{$t$-Uniform Set Cover} determines the existence of a set cover of size at most $\lceil\frac{n}{t}\rceil-k$ where $k$ is an integer parameter and $n$ is the size of the universe. 
\end{theorem}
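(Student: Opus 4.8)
The plan is to reduce the below-guarantee question to finding a combinatorial structure whose size is bounded by a function of $k$, and then to detect that structure using color-coding. First I would reformulate the goal: a set cover of size at most $\lceil n/t\rceil - k$ exists if and only if we can "save" $k$ sets relative to the trivial cover that partitions $U$ into $\lceil n/t\rceil$ blocks of size (at most) $t$. Concretely, take any solution $\mathcal{C}$; greedily we may assume each chosen set is used to cover some elements not covered by the others, and by a standard replacement argument we may assume every set in $\mathcal{C}$ has size at most $2t-1$ (if a set covers $\geq 2t$ fresh elements, split its fresh part into blocks of size $t$, which are all available since the instance is $t$-uniform; this does not increase the solution size). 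Thus without loss of generality every set in an optimal solution has size in $\{1,\dots,2t-1\}$, and since $t$ is constant, each such set has constant size.

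Next I would set up the counting. If the solution consists of $c_j$ sets covering exactly $j$ new elements, for $j=1,\dots,2t-1$, then $\sum_j j\, c_j = n$ and the solution size is $\sum_j c_j$. The "savings" over $\lceil n/t\rceil$ is $\lceil n/t\rceil - \sum_j c_j$, and a short calculation shows this is a nonnegative linear combination of the $c_j$ for $j \ne t$: roughly, sets of size $>t$ contribute positive savings and sets of size $<t$ contribute negative savings, with the size-$t$ sets contributing nothing (modulo a $\pm 1$ rounding term). Hence if the savings is at least $k$, the total number of elements lying in solution-sets of size $\ne t$ is $O(k)$ (here the constant hides a factor depending on $t$): each "oversized" set of size $t+i$ buys $i/t \cdot(\text{something})$ worth of savings while spending $t+i = O(1)$ elements. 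So there is a set $W \subseteq U$ of size $O(k)$ such that $U \setminus W$ is covered by disjoint size-$t$ blocks and $W$ is covered by the remaining (constantly-many-elements-each) sets with the required savings.

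The algorithm then guesses, via color-coding, which elements form $W$. Using Lemma~\ref{lem:ph} with the pair $(n, |W|)$ where $|W| = O(k)$, we obtain a perfect hash family of size $2^{O(k)}\log n$; for each hash function we look for a colorful $W$, i.e. we try to pick the solution-sets restricted to $W$. Since $|W| = O(k)$ and each relevant set meets $W$ in a constant number of elements, the number of candidate "non-block" sets intersecting a fixed colorful $W$ is $n^{O(1)}$, and choosing a subfamily of them that covers $W$, is internally consistent (disjointness is not required for \textsc{$t$-Uniform Set Cover} in general; for \capp it holds automatically since we only deal with covering), and achieves savings $\geq k$ can be done by dynamic programming over the at most $2^{|W|} = 2^{O(k)}$ subsets of colors, in time $2^{O(k)} n^{O(1)}$. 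Once $W$ and its cover are fixed, we complete the solution by partitioning $U \setminus W$ arbitrarily into $t$-blocks, which are guaranteed to be in $\mathcal{S}$ by $t$-uniformity. Summing over the hash family gives total running time $2^{O(k)}\poly(n)$.

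The main obstacle I anticipate is the structural/arithmetic step: proving cleanly that an optimal (or near-optimal) solution can be normalized so that only $O(k)$ elements are "touched" by sets that are not size-$t$ blocks. One has to handle the rounding in $\lceil n/t\rceil$ carefully, argue the size-$\leq 2t-1$ reduction does not destroy the savings, and make sure that forcing the complement to be a union of $t$-blocks is compatible with an arbitrary cover (it is, because replacing any cover of $U\setminus W$ by a block partition only decreases the count). The color-coding and DP part is then routine given the $O(k)$ bound on $|W|$.
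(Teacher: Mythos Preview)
Your overall plan (isolate an $O(k)$-size ``interesting'' part and color-code it) is the right shape, but the structural step you rely on does not hold as stated. The claim that splitting a set with $\ge 2t$ fresh elements into $t$-blocks ``does not increase the solution size'' is false: replacing one set that freshly covers $j$ elements by $\lceil j/t\rceil$ blocks increases the number of sets by $\lceil j/t\rceil-1\ge 1$, i.e.\ it \emph{destroys} exactly the savings that set contributed. Consequently your counting argument collapses: there are yes-instances (savings $\ge k$) whose only optimal solutions use a few very large sets, so the elements lying in ``sets of size $\ne t$'' are not $O(k)$ without a further, different normalization. You also conflate set size with the number of fresh elements; sets in $\mathcal{S}$ can be arbitrarily large even when they contribute only a few fresh elements, and your later DP (which assumes each non-block set meets $W$ in constantly many elements) needs that bound.

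The paper avoids this difficulty by a different structural step: it first runs a \emph{greedy} phase, picking sets that cover $\ge t+1$ new elements, and stops as soon as no such set exists. Either $\ge tk$ ``extra'' elements have been collected (immediate yes), or at most $tk$ sets were picked covering a set $G$ with $|G|=O(k)$, and crucially \emph{no set in $\mathcal{S}$ covers more than $t$ elements of $U\setminus G$}. This last property is what makes the arithmetic go through: if a solution of size $\lceil n/t\rceil-k$ exists, then some $s''\le|G|$ of its sets cover $G$ together with at least $s''t+tk-|G|$ extra elements $H\subseteq U\setminus G$, and the remaining elements can always be finished off with $t$-blocks. Since $|G|+|H|=O(k)$, one color-codes only $H$ (with $G$ known explicitly) and runs the standard $2^{O(k)}$ Set Cover DP on the universe $G\cup\{\text{colors of }H\}$. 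So the key idea you are missing is this greedy preprocessing that pins down $G$ and yields the ``no set is large outside $G$'' property; once you have that, the color-coding part is essentially what you wrote.
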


\begin{proof}
In the first stage of our algorithm we start picking sets greedily (i.e. in each step, we pick the set that covers the largest number of previously uncovered elements) until there are no sets that cover at least $t+1$ previously uncovered elements. If during this stage we pick $s$ sets and cover at least $st+tk$ elements then our instance is a \textsc{Yes}-instance. Indeed, we can cover the remaining elements using $\lceil\frac{n-st-tk}{t}\rceil$ sets. In total, such a covering has at most  $\lceil\frac{n-st-tk}{t}\rceil+s = \lceil\frac{n}{t}\rceil-k$ subsets. Intuitively, each set picked during this greedy stage covers at least one additional element. Therefore, if we pick more than $tk$ subsets then our input instance is a \textsc{Yes}-instance. This means that after the greedy stage we either immediately conclude that our input is a \textsc{Yes}-instance or we have used at most $tk$ subsets and covered at most $tk\cdot t+tk=O(k)$ elements, for a fixed $t$. Let us denote the subset of all covered elements by $G$. Note that there is no subset that covers more than $t$ elements from $U\setminus G$. 
%Note that there is no subset that does not intersect with $G$ and covers more than $t$ elements from $U\setminus G$. 

If there is a covering of size $\lceil\frac{n}{t}\rceil-k$ then there are $s'\leq |G| \leq tk \cdot t+tk$ subsets that cover $G$ and at least $s't + tk-|G|$ elements of $U\setminus G$. Moreover, if such subsets exist then our input is a \textsc{Yes}-instance.
%\todo{probably provide more details}

Hence, it is enough to find $s'$ such subsets. For each $s''\in [|G|]$ we attempt to find subsets $S_1, S_2, \dots, S_{s''}$ such that $G\subset S_1 \cup S_2 \cup \dots \cup S_{s''}$ and $S_1 \cup S_2 \cup \dots \cup S_{s''}$ contains at least $s''t+tk-|G|$ elements from $U\setminus G$. Assume that for some $s''$ such sets exist. Let $H$ be an arbitrary subset of $(S_1 \cup S_2 \cup \dots \cup S_{s''}) \setminus G$ of size $s''t+tk-|G|$. Note that we do not know the set $H$. 
%However, we employ the color-coding technique  and randomly color the elements from the subset $U\setminus G$ with $|H|$ colors.
However, we employ the color-coding technique, and construct a $(n,|H|)$-perfect hash family $\mathcal{F}$.
%With probability at least $\frac{1}{|H|!}$ all elements of the set $H$ get distinct colors.
We iterate over all $f \in \mathcal{F}$.
Recall that $|H|=s''t+tk-|G|$.
Now using dynamic programming we can find $H$ in time $2^{O(k)}$ in a standard way.
In order to do that, we consider a new universe $U'$ which contains elements from the set $G$ and elements corresponding to $|H|$ colors corresponding to values assigned by $f$ to $U\setminus G$.
Moreover, if a subset $P$ was a subset that can be used for covering in  \textsc{$t$-Uniform Set Cover} then we replace it with $(P\cap G) \cup \{\text{ all values $f$ assign to elements in } P\cap (U\setminus G)\}$. We replace our \textsc{$t$-Uniform Set Cover} instance with an instance of \textsc{Set Cover} with a universe of size $|G|+|H|\leq |G| + s''t+tk-|G|\leq |G| \cdot t+tk\leq (tk \cdot t+tk)\cdot t+tk=O(k)$. It is easy to see that our original instance is a \textsc{Yes}-instance if and only if the constructed instance of \textsc{Set Cover} admits a covering by at most $s''$ sets (under the assumption that $f$ indeed assigns distinct values to elements of $H$). If $f$ does not assign distinct values then a \textsc{Yes}-instance can be become a \textsc{No}-instance. However, a \textsc{No}-instance cannot become a \textsc{Yes}-instance.
Since $|H|=O(k)$, the overall running time is $2^{O(k)} poly(n)$. 
\end{proof}
As a corollary of the previous theorem we get the following result.
\belowguarantee*
\begin{proof} The result immediately follows from Theorem~\ref{thm:uniformset} with $t=2$.
\end{proof}

%if we manage to take at least $tk$ sets during the greedy procedure than our instance is a Yes-instance.

\bibliography{literature}
%\appendix
%\input{appendix}

\end{document}